\numberwithin{equation}{section}
\renewcommand{\qed}{\hfill \blacksquare}
\def\@fnsymbol#1{\ensuremath{\ifcase#1\or \dagger\or \ddagger\or
		\mathsection\or \mathparagraph\or \|\or **\or \dagger\dagger
		\or \ddagger\ddagger \else\@ctrerr\fi}}
\DeclareMathOperator{\Char}{char}
\title{Function-private Conditional Disclosure of Secrets and Multi-evaluation Threshold Distributed Point Functions\thanks{The is the full version of the paper that will appear in Cryptology and Network Security (CANS), 2021.}}
\author{Nolan Miranda \inst{1,3}, Foo Yee Yeo \inst{2}, Vipin Singh Sehrawat \inst{1}}
\institute{Seagate Technology, USA \and Seagate Technology, Singapore \and Stanford University, USA}
\begin{document}
	\maketitle 
\begin{abstract} \normalsize
	Conditional disclosure of secrets (CDS) allows multiple parties to reveal a secret to a third party if and only if some pre-decided condition is satisfied. In this work, we bolster the privacy guarantees of CDS by introducing function-private CDS wherein the pre-decided condition is never revealed to the third party. We also derive a function secret sharing scheme from our function-private CDS solution. The second problem that we consider concerns threshold distributed point functions, which allow one to split a point function such that at least a threshold number of shares are required to evaluate it at any given input. We consider a setting wherein a point function is split among a set of parties such that multiple evaluations do not leak non-negligible information about it. Finally, we present a provably optimal procedure to perform threshold function secret sharing of any polynomial in a finite field. 
\end{abstract}	

\section{Introduction}
In 1994, De Santis et al.~\cite{Santis[94]} introduced the concept of function secret sharing (FSS) as a special case of secret sharing~\cite{Shamir[79],Blakley[79]} wherein a class $\mathcal{F}$ of efficiently computable and succinctly described functions $f_i: \{0,1\}^\ell \longrightarrow\mathbb{G}$, where $\mathbb{G}$ is a group, is shared among a set of parties. FSS allows a dealer to randomly split an arbitrary function $f \in \mathcal{F}$ into $n \geq 2$ shares $\{f_i\}_{i=1}^n$ such that only authorized subsets of shares can be combined to evaluate or to reconstruct $f$. In 2014, Gilboa and Ishai~\cite{Niv[14]} considered a specialization of FSS called distributed point functions (DPF), which is an FSS scheme for a family of point functions $\mathcal{P}(A,B)=\{p_{a,b}: a\in A,\,b\in B\}$, whose members are defined as: 
$$p_{a,b}(x) = 
\begin{cases} 
	b & x = a,\\
	0 & \text{otherwise}.
\end{cases}
$$ 

A DPF scheme is said to be threshold if for some fixed $t \leq n$, all subsets of function shares $\{p_i\}_{i=1}^n$ with cardinality at least $t$ can be used to reconstruct $p_{a,b}$. DPF has found applications in privacy-preserving primitives such as private information retrieval~\cite{Niv[14]} and private contact tracing~\cite{Ditti[20]}.

The second topic whose existing solutions we improve upon is called conditional disclosure of secrets (CDS)~\cite{Gertner[98]}, which allows two or more parties, holding some input, to share a secret $s$ with an external party Carol such that $s$ is revealed only if some fixed condition holds on the joint input and shared randomness of the parties. The condition is often encoded as some Boolean condition function $h: \mathcal{C} \to \{0,1\}.$ The parties are allowed to send a single message to Carol which may depend on their ``shares'' and shared randomness. CDS has found multiple applications in cryptography, including private information retrieval~\cite{Gertner[98]}, attribute-based encryption~\cite{RomaIor[15]}, priced oblivious transfer~\cite{AieIsh[01]} and secret sharing for  uniform/general/forbidden/graph access structures~\cite{BFMP[20],BFMP[17],Apple[19],Liu[18],Benny[20]}. 

\subsection{Our Contributions}
Our contributions are multifold, which can be described as:
\subsubsection{Function-private CDS.}
Content-based filtering is a tried and tested subroutine used in generating content recommendations. It generates the intended information for users by comparing representations of information search to those of the contents extracted from user profiles, indicating users' interests. Recently, streaming services have been establishing collaborations wherein they provide services and content recommendations based on their combined data. For instance, Spotify, Hulu and Showtime announced such a collaboration in 2018~\cite{Spotify[18]}. Moreover, almost all major streaming services have migrated to the cloud~\cite{GoogleSpotify[1],AmazonNetflix[1],AmazonHulu[1]}. We know from~\cite{Anick[89],JoonHo[94],VerGoff[61]} that it is possible for these streaming services to encode their content recommendations generating content-based filtering as Boolean functions, which reveal the recommendations for various demographics only when the function evaluates to ``true''. Since multiple collaborating service providers use the same cloud for their data and recommendations' generation, the third-party cloud service provider serves as Carol in these settings. Hence, CDS fits well in such settings. However, with their services outsourced to the cloud, the service provides may want to hide information about their recommendations suggesting algorithms. Hence, it is desirable to have a CDS scheme that hides the condition function(s) from Carol, even after the secret, i.e., recommendations for the target demographic, are revealed.   

We address this requirement by introducing function-private CDS, which we define as follows: let $h$ belong to a family $\mathcal{H}: \mathcal{C} \to \{0,1\}$ of Boolean condition functions, then a function-private CDS scheme satisfies the following conditions:
\begin{enumerate}
	\item Correctness: for every $c \in \mathcal{C}$, if it holds that $h(c) = 1$, then the correct secret $s$ is revealed to Carol with probability 1. Else, if $h(c) = 0$, then Carol rejects with probability 1.
	\item Secrecy: for any $c \in \mathcal{C}$ such that $h(c) = 0$, it holds that Carol cannot gain any information about the secret. 
	\item Function privacy: for certain conditions on the parties' inputs, it holds that no party has any non-negligible advantage in distinguishing the Boolean condition function $h \in \mathcal{H}$ from every $h' \in \mathcal{H}$, where $h'(c) = h(c)$. 
	\item Input privacy: for any $c, c' \in \mathcal{C}$ such that $h(c) = h(c')$, it holds that Carol has no non-negligible advantage in distinguishing between $c$ and $c'$.
\end{enumerate}
Function privacy has been studied in the context of functional encryption~\cite{ShaAg[13],DanAna[13],DanAnaGil[13],ZviGil[15],FanTang[18],AbdaDavid[21]} but ours is the first scheme to achieve it for CDS. For an introduction to functional encryption, we refer the interested reader to~\cite{Amali[21]}.
\subsubsection{Function-private CDS to FSS.}
Given a function-private CDS scheme for $k$ parties $P_1, \ldots, P_k$ (excluding Carol) for the family of Boolean condition functions $\mathcal{H}$, secret domain $\mathcal{S}$, and randomness $r$, we prove that there exists a $k$-out-of-$k$ function secret sharing scheme for the family $\mathcal{H}$. To the best of our knowledge, this is the first derivation of FSS from CDS.
\subsubsection{Multi-evaluation Threshold DPF.}
In an example in \cite{BBDK[00]}, Beimel et al. introduced threshold function secret sharing of a family of point functions $\mathcal{P}(A,B)$. We identify a weakness in that example. Specifically, their example approach leaks information about the point function $p_{a,b} \in \mathcal{P}(A,B)$ if multiple evaluations are performed. We rectify this issue by extending their solution such that repeated evaluations do not leak non-negligible information about $p_{a,b}$. Our scheme, called multi-evaluation threshold DPF, uses a key-homomorphic pseudorandom function (PRF) family. \\[1.5mm]
\textit{Key-homomorphic PRF:} In a PRF family~\cite{Gold[86]}, each function is specified by a key such that it can be evaluated deterministically given the key whereas it behaves like a random function without the key. For a PRF $F_k$, the index $k$ is called its key/seed. A PRF family $F$ is called key-homomorphic if the set of keys has a group structure and there is an efficient algorithm that, given $F_{k_1}(x)$ and $F_{k_2}(x)$, outputs $F_{k_1 \oplus k_2}(x)$, where $\oplus$ is the group operation~\cite{Naor[99]}.
\subsubsection{Optimal Threshold FSS for Polynomials.} 
We present a novel method to perform threshold FSS for any polynomial in a finite field. We prove that the share size of our scheme is optimal for FSS over polynomials.  

\subsection{Organization}
The remaining text is organized as follows: \Cref{sec2} recalls the concepts required for the rest of the paper. In \Cref{sec3}, we present our multi-evaluation threshold DPF scheme. In \Cref{sec4}, we define function-private CDS and introduce the first scheme to realize it. In the same section, we provide a mechanism to extend any function-private CDS scheme into an FSS scheme. \Cref{sec5} details our optimal scheme for FSS over polynomials. In \Cref{sec6}, we discuss interesting problems for future work in FSS. 

\section{Preliminaries}\label{sec2}

\begin{definition}[Negligible Function]\label{Neg}
	\emph{For security parameter $\lambda$, a function $\epsilon(\lambda)$ is called \textit{negligible} if for all $c > 0$, there exists a $\lambda_0$ such that $\epsilon(\lambda) < 1/\lambda^c$ for all $\lambda > \lambda_0$.}
\end{definition}

\begin{definition}[Computational Indistinguishability~\cite{Gold[82]}]
	\emph{Let $X = \{X_\lambda\}_{\lambda \in \mathbb{N}}$ and $Y = \{Y_\lambda\}_{\lambda \in \mathbb{N}}$ be ensembles, where $X_\lambda$'s and $Y_\lambda$'s are probability distributions over $\{0,1\}^{\kappa(\lambda)}$ for $\lambda \in \mathbb{N}$ and some polynomial $\kappa(\lambda)$. We say that $\{X_\lambda\}_{\lambda \in \mathbb{N}}$ and $\{Y_\lambda\}_{\lambda \in \mathbb{N}}$ are polynomially/computationally indistinguishable if the following holds for every (probabilistic) polynomial-time algorithm $\mathcal{D}$ and all $\lambda \in \mathbb{N}$:
	\[\Big| \Pr[t \leftarrow X_\lambda: \mathcal{D}(t) = 1] - \Pr[t \leftarrow Y_\lambda: \mathcal{D}(t) = 1] \Big| \leq \epsilon(\lambda),\]
	where $\epsilon$ is a negligible function.}
\end{definition}	

\begin{remark}[Perfect Indistinguishability]
	We say that $\{X_\lambda\}_{\lambda \in \mathbb{N}}$ and $\{Y_\lambda\}_{\lambda \in \mathbb{N}}$ are perfectly indistinguishable if the following holds for all $t$:
	\[\Pr[t \leftarrow X_\lambda] = \Pr[t \leftarrow Y_\lambda].\]
\end{remark}

Consider adversaries interacting as part of probabilistic games. For an adversary $\mathcal{A}$ and two games $\mathfrak G_0, \mathfrak G_1$ with which it can interact, $\mathcal{A}'s$ distinguishing advantage is: $Adv_{\mathfrak G_0, \mathfrak G_1}(\mathcal{A}) := \Big|\Pr[\mathcal{A} \text{ accepts in } \mathfrak G_0] - \Pr[\mathcal{A} \text{ accepts in } \mathfrak G_1]\Big|.$ For security parameter $\lambda$ and a negligible function $\epsilon$, the two games are said to be computationally indistinguishable if it holds that: $Adv_{\mathfrak G_0,\mathfrak G_1}(\mathcal{A}) \leq \epsilon(\lambda)$.

\begin{definition}[PRF] 
	\emph{Let $A$ and $B$ be finite sets, and let $\mathcal{F} = \{ F_k: A \rightarrow B \}$ be a function family, endowed with an efficiently sampleable distribution (more precisely, $\mathcal{F}$, $A$ and $B$ are all indexed by the security parameter $\lambda)$. We say that $\mathcal{F}$ is a PRF family if the following two games are computationally indistinguishable:
	\begin{enumerate}
		\item Choose $F_k \in \mathcal{F}$ and give the adversary adaptive oracle access to $F_k(\cdot)$.
		\item Choose a uniformly random function $U: A \rightarrow B$ and give the adversary adaptive oracle access to $U(\cdot).$
	\end{enumerate}}
\end{definition} 

Numerous PRF families with various useful properties have been constructed~\cite{Naor[99],Boneh[13],Ban[14],Parra[16],SamK[20],Vipin[19],Zvika[15],KimDan[17],RotBra[17],RanChen[17],KimWu[17],KimWu[19],Qua[18]}. For a detailed introduction to PRFs and review of the noteworthy results, we refer the interested reader to \cite{AndAlo[17]}.

\section{Multi-evaluation Threshold DPF}\label{sec3}
In this section, we introduce multi-evaluation threshold DPFs, and present a scheme to realize it.  

\begin{definition}
	\emph{Given a string $a \in \{0,1\}^\ell$ and a value $\alpha \in \mathbb{F}$, a computational multi-evaluation distributed point function scheme for a $t$-out-of-$n$ threshold structure is defined as a collection of three algorithms $(\mathsf{Gen}, \mathsf{Eval}, \mathsf{Rec})$ such that:}
	\begin{itemize} 
		\item \emph{A randomized algorithm $\mathsf{Gen}$ takes three inputs, $a\in\{0,1\}^\ell$, $\alpha\in\mathbb{F}$ and a security parameter $\lambda\in\mathbb{Z}^+$, and generates $n$ keys $\{f_i\}_{i=1}^n$, representing secret shares of a dimension-$2^\ell$ vector $\mathbf{v}$ that has value $\alpha \in \mathbb{F}$ only at the $a$-th position and is zero at every other position.
			\item A deterministic algorithm $\mathsf{Eval}$ that takes three inputs, a key $f_i~(i \in [n])$, $x \in \{0,1\}^\ell$ and some $r\in R$, and outputs a share $s_i$.
			\item A deterministic algorithm $\mathsf{Rec}$ that takes the outputs of $\mathsf{Eval}$ from $t$ parties and outputs an element of $\mathbb{F}$.}
	\end{itemize}
	\emph{These algorithms satisfy the following three conditions:}
	\begin{itemize}
		\item Computational Correctness: \emph{for all strings $a \in \{0,1\}^\ell$, output values $\alpha \in \mathbb{F}$, $\lambda\in\mathbb{Z}^+$, $r\in R$, keys $\{f_i\}_{i=1}^n \leftarrow \mathsf{Gen}(a, \alpha, \lambda)$ and subsets $T\subseteq [n]$ of size $t$, it holds that: $\Pr\left[\mathsf{Rec}\left(\{\mathsf{Eval}(f_i, a, r)\}_{i\in T}\right) = \alpha\right] = 1$, and for all strings $x \in \{0,1\}^l$ such that $x\neq a$, $\Pr\left[\mathsf{Rec}\left(\{\mathsf{Eval}(f_i, x, r)\}_{i\in T}\right) = 0\right] > 1-\text{negl}(\lambda).$
		}
		\item Perfect Secrecy: \emph{for all strings $a, b \in \{0,1\}^\ell$, output values $\alpha, \beta \in \mathbb{F}$, $\lambda\in\mathbb{Z}^+$, keys $\{f_i\}_{i=1}^n \leftarrow \mathsf{Gen}(a, \alpha, \lambda)$ and $\{f_i'\}_{i=1}^n \leftarrow \mathsf{Gen}(b, \beta, \lambda)$ and subset $S\subset [n]$ of size $<t$, it holds that $\{f_i\}_{i\in S}$ and $\{f_i'\}_{i\in S}$ are perfectly indistinguishable.}
		\item Computational Multi-evaluation: \emph{for all strings $a, b \in \{0,1\}^\ell$, output values $\alpha, \beta \in \mathbb{F}$, $\lambda\in\mathbb{Z}^+$, keys $\{f_i\}_{i=1}^n \leftarrow \mathsf{Gen}(a, \alpha, \lambda)$ and $\{f_i'\}_{i=1}^n \leftarrow \mathsf{Gen}(b, \beta, \lambda)$, it holds for all strings $x_1,\,x_2,\,\ldots,\,x_m \neq a, b$ and $r_1,\,r_2,\,\ldots,\,r_m\in R$ distinct, and subset $S\subset [n]$ of size $<t$, that:
			$$\left(\{f_i\}_{i\in S},\ \{\mathsf{Eval}(f_i, x_h, r_h)\}_{i\in [n], h\in [m]}\right), \left(\{f'_i\}_{i\in S},\ \{\mathsf{Eval}(f_i', x_h, r_h)\}_{i\in [n], h\in [m]}\right)$$
			are computationally indistinguishable w.r.t $\lambda$.
		}
	\end{itemize}
\end{definition}

\subsection{$n$-out-of-$n$ Multi-evaluation DPF}
Here, we describe an $n$-out-of-$n$ computational multi-evaluation DPF for the class of point functions $\mathcal{P}\left(\{0,1\}^\ell,\mathbb{F}\right)$, where $\mathbb{F}=\mathbb{F}_q$ is the finite field with cardinality $q$. Let $\mathcal{F}=\{F^{(\lambda)}: \mathcal{K}^{(\lambda)}\times R\to\mathbb{F}^{2\ell+\lambda+1}\}$
be a family of key-homomorphic PRFs such that the advantage of any polynomial-time adversary in distinguishing $F^{(\lambda)}$ from random is negligible in $\lambda$, and such that $F^{(\lambda)}(k_1+k_2,r)=F^{(\lambda)}(k_1,r)+F^{(\lambda)}(k_2,r)$ for all $k_1,k_2\in\mathcal{K}^{(\lambda)}$ and $r\in R$. Write $F^{(\lambda)}=(F^{(\lambda)}_1, F^{(\lambda)}_2)$ with $F^{(\lambda)}_1: \mathcal{K}^{(\lambda)}\times R\to\mathbb{F}^{2\ell+\lambda}$ and $F^{(\lambda)}_2:\mathcal{K}^{(\lambda)}\times R\to\mathbb{F}$. For conciseness, we write $F$ for $F^{(\lambda)}$ (and $F_k$ for $F^{(\lambda)}_k$, $\mathcal{K}$ for $\mathcal{K}^{(\lambda)}$) when $\lambda$ is clear from context.

We make the assumption that $\mathcal{K}^{(\lambda)}$ is an abelian group and that the order of any element in $\mathcal{K}^{(\lambda)}$ is bounded by some polynomial $\gamma(\lambda)$. (This is often the case; in particular, this holds when $\mathcal{K}^{(\lambda)}=\left(\mathbb{Z}/h(\lambda)\mathbb{Z}\right)^{g(\lambda)}$, where $g(\lambda)$ is an arbitrary function of $\lambda$ and $h(\lambda)$ is polynomially bounded.) Since $|\mathcal{K}^{(\lambda)}|$ is superpolynomial in $\lambda$, if the above conditions hold, then there exists $\lambda_0$ such that $\frac{\gamma(\lambda)^{2\ell n}}{|\mathcal{K}^{(\lambda)}|}<1-\frac{1}{\lambda}$ for all $\lambda\geq\lambda_0$. Hence, by replacing $\lambda$ by a larger $\lambda'$ if needed and truncating the output, we may assume that $\frac{\gamma(\lambda)^{2\ell n}}{|\mathcal{K}^{(\lambda)}|}<1-\frac{1}{\lambda}$ holds for all $\lambda$.

\begin{remark}
	\label{KH-PRF_remark}
	If $\mathbb{F}$ has characteristic $p$, then for any $k\in p\mathcal{K}$, $r\in R$,
	$$F(k, r)=F(pk',r)=pF(k',r)=0.$$
	Thus, any key $k\in p\mathcal{K}$ is a ``weak key'', and since $F$ is a secure PRF, $|\mathcal{K}/p\mathcal{K}|^{-1}$ must be a negligible function of $\lambda$. By the fundamental theorem of finite abelian groups, we can write
	$$\mathcal{K}\cong\mathbb{Z}/(p_1^{n_1}\mathbb{Z})\times\mathbb{Z}/(p_2^{n_2}\mathbb{Z})\times\cdots\times\mathbb{Z}/(p_l^{n_l}\mathbb{Z})$$
	where $p_1,\ldots,p_l$ are (not necessarily distinct) primes. Assume $p_i=p$ for $1\leq i\leq l'$ and that $p_i\neq p$ for $l'<i\leq l$. Then $\mathcal{K}/p\mathcal{K}\cong (Z/p\mathbb{Z})^{l'}$ and thus $|\mathcal{K}/p\mathcal{K}|^{-1}=1/p^{l'}$ must be a negligible function of $\lambda$.
\end{remark}

Our scheme is a collection of three algorithms, $(\mathsf{Gen}(a,\alpha,\lambda)$, $\mathsf{Eval}(f_i, x, r)$, $\mathsf{Rec}(s_1,\ldots,s_n))$, which are defined as:

$$\bm{\mathsf{Gen}(a,\alpha,\lambda)}$$
\begin{enumerate}[topsep=-3pt]
	\item Choose $2\ell$ random vectors $v_0, v_1, v_2, \ldots, v_{2\ell-1}$ from $\mathbb{F}^{2\ell+\lambda}$.
	\item Choose $2\ell n$ random vectors $v_{i,j}\in\mathbb{F}^{2\ell+\lambda}$ ($1\leq i\leq n$, $0\leq j\leq 2\ell-1$) subject to the condition $v_j=\sum_{i=0}^n v_{i,j}$ for all $j$.
	\item Let $a = a_1a_2\ldots a_\ell$ and compute $\theta=\sum_{j=0}^{\ell-1} v_{2j+a_j}$. This sum includes either $v_{2j}$ or $v_{2j+1}$ depending on whether the $j$-th bit of $a$ is 0 or 1 respectively.
	\item Choose $2\ell$ random elements $\alpha_0, \alpha_1, \ldots, \alpha_{2\ell-1}\in\mathbb{F}$ subject to the condition $\alpha=\sum_{j=0}^{\ell-1}\alpha_{2j+a_j}$.
	\item Choose $2\ell n$ random elements $\alpha_{i,j}\in\mathbb{F}$ ($1\leq i\leq n$, $0\leq j\leq 2\ell-1$) subject to the condition that $\alpha_j=\sum_{i=1}^n\alpha_{i,j}$ for all $j$.
	\item Choose $2\ell n$ linearly independent keys $k_{i,j}$ ($1\leq i\leq n$, $0\leq j\leq 2\ell-1$).
	\item Compute $k=\sum_{i=1}^n \sum_{j=0}^{\ell-1} k_{i,2j+a_j}$.
	\item Output $f_i=(v_{i,0}, \ldots, v_{i,2\ell-1}, \theta, \alpha_{i,0}, \ldots, \alpha_{i,2\ell-1}, k_{i,0}, \ldots, k_{i,2\ell-1}, k)$.
\end{enumerate}

$$\bm{\mathsf{Eval}(f_i, x, r)}$$
\begin{enumerate}[topsep=-3pt]
	\item Parse $f_i$ as $(v_{i,0}, \ldots, v_{i,2\ell-1}, \theta, \alpha_{i,0}, \ldots, \alpha_{i,2\ell-1}, k_{i,0}, \ldots, k_{i,2\ell-1}, k).$
	\item Let $x = x_1x_2\ldots x_\ell$. Compute $s_{i,0}=\sum_{j=0}^{\ell-1}\left(v_{i,2j+x_j} + F_1(k_{i,2j+x_j},r)\right)$.
	\item Compute $s_{i,1}=\sum_{j=0}^{\ell-1}\left(\alpha_{i,2j+x_j} + F_2(k_{i,2j+x_j},r)\right)$.
	\item Output $s_i=(s_{i,0}, s_{i,1}, r, \theta, k)$
\end{enumerate}

$$\bm{\mathsf{Rec}(s_1,\ldots,s_n)}$$
\begin{enumerate}[topsep=-3pt]
	\item Parse $s_i$ as $(s_{i,0}, s_{i,1}, r, \theta, k)$.
	\item Compute $\sum_{i=0}^n s_{i,0}$. If this equals $\theta+F_1(k,r)$, output $\sum_{i=0}^n s_{i,1}-F_2(k,r)$ else output $0$.
\end{enumerate}

\begin{remark}
	In the above scheme, each party has a share size of
	$$(4\ell^2+2\lambda\ell+4\ell+\lambda)\log|\mathbb{F}|+(2\ell+1)\log|\mathcal{K}|,$$
	and the output of $\mathsf{Eval}$ for each party has size
	$$(4\ell+2\lambda+1)\log|\mathbb{F}|+\log|\mathcal{K}|+\log|R|,$$
	both of which are independent of the number of parties.
\end{remark}

\begin{theorem}
	\label{n_out_of_n_DPF}
	The above scheme is an $n$-out-of-$n$ computational multi-evaluation DPF scheme for sharing the class of point functions $\mathcal{P}\left(\{0,1\}^\ell,\mathbb{F}\right)$.
\end{theorem}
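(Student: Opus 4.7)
The proof splits naturally into establishing the three itemized properties. Correctness follows from the key-homomorphism of $\mathcal{F}$ combined with the linearity of the additive sharings; perfect secrecy from the fact that no more than $n-1$ shares are revealed, so at least one additive summand is hidden at every stage; and the multi-evaluation property from a reduction to the PRF security of $\mathcal{F}$ via a hybrid argument.

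For computational correctness at $x = a$, I would push $\sum_{i=1}^{n}$ through the key-homomorphism to obtain $\sum_i s_{i,0} = \sum_{j=0}^{\ell-1} v_{2j+a_j} + F_1\!\left(\sum_{i,j} k_{i,2j+a_j},\, r\right) = \theta + F_1(k,r)$, so the equality test in $\mathsf{Rec}$ succeeds and the output reduces to $\sum_{i,j}\alpha_{i,2j+a_j} + F_2(k,r) - F_2(k,r) = \alpha$. For $x \neq a$, fix an index $j_0$ with $x_{j_0} \neq a_{j_0}$; the reconstructed vector then differs from $\theta$ by a sum that includes the independent uniform summand $v_{2j_0 + x_{j_0}} - v_{2j_0 + a_{j_0}}$, so the corresponding difference is uniform in $\mathbb{F}^{2\ell+\lambda}$ and independent of $F_1(k,r)-F_1(k',r)$. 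The test therefore accepts only with probability at most $|\mathbb{F}|^{-(2\ell+\lambda)} \leq 2^{-\lambda}$, which is negligible in $\lambda$.

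For perfect secrecy, fix any $S \subset [n]$ with $|S| < n$. The revealed $\{v_{i,j}\}_{i \in S}$ form the marginal of an $n$-wise additive sharing of the uniform $v_j$'s, and since at least one summand of each $v_j$ is hidden, $\theta = \sum_j v_{2j+a_j}$ is uniform in $\mathbb{F}^{2\ell+\lambda}$ and independent of the revealed shares; the same argument applies to the $\alpha_{i,j}$'s. For the key $k$, the linear independence of the $k_{i,j}$'s combined with the presence of at least one hidden summand per column guarantees that $k$ is uniform in $\mathcal{K}$ in the quotient by the span of the revealed keys. The joint distribution of $\{f_i\}_{i \in S}$ is therefore identical for any two input pairs $(a,\alpha)$ and $(b,\beta)$, giving perfect indistinguishability.

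For the multi-evaluation property, I would use a hybrid argument. In the first hybrid, replace every evaluation $F(k_{i,j},r_h)$ with $i \notin S$ by an independent uniform element of $\mathbb{F}^{2\ell+\lambda+1}$, which reduces to PRF security because no such $k_{i,j}$ is ever revealed to the adversary. After this replacement, each $s_{i,0}^{(h)}$ and $s_{i,1}^{(h)}$ with $i \notin S$ is a sum of hidden $v_{i,j}$'s (respectively $\alpha_{i,j}$'s) and fresh random masks, hence uniformly distributed and independent of $(a,\alpha)$, while the $\mathsf{Eval}$ outputs for $i \in S$ are deterministic functions of $\{f_i\}_{i \in S}$ whose distribution is already input-independent by the perfect-secrecy step. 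The entire view thereby becomes statistically independent of $(a,\alpha)$. The principal obstacle is the PRF reduction in the presence of key-homomorphism: because $F(k_1{+}k_2,r)=F(k_1,r)+F(k_2,r)$, an adversary that observes enough $F(k_{i,j},r_h)$ can in principle derive $F(k',r_h)$ for any $k'$ in the $\mathbb{Z}$-span of the queried keys; I therefore need to verify that the sums $k_h = \sum_{i,j} k_{i, 2j+x_{h,j}}$ appearing across the queried $(x_h, r_h)$, together with the target key $k$, never collide in a way that would force a nontrivial linear relation among the PRF values in the view. Ruling out such relations is exactly where the linear independence of the $k_{i,j}$'s, the distinctness of the $r_h$'s, the hypotheses $x_h \neq a, b$, and the order bound $\gamma(\lambda)$ set up in Remark~\ref{KH-PRF_remark} and the surrounding discussion enter the analysis.
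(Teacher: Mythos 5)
Your decomposition into the three properties matches the paper's, and two of the three parts track its proof closely, but your treatment of correctness at $x\neq a$ is a genuinely different (and arguably better) route. The paper sets $u_j=v_j+F_1(k_j,r)$, invokes PRF security to argue that $u_0,\ldots,u_{2\ell-1}$ are indistinguishable from random vectors, and then bounds the probability that $2\ell$ random vectors in $\mathbb{F}^{2\ell+\lambda}$ are linearly dependent by $q^{-\lambda}$. You instead observe that $\sum_i s_{i,0}-\bigl(\theta+F_1(k,r)\bigr)=\sum_{j:\,x_j\neq a_j}\bigl(v_{2j+x_j}-v_{2j+a_j}\bigr)+F_1(k_x-k,r)$ with $k_x=\sum_{i,j}k_{i,2j+x_j}$, and that the first summand is uniform in $\mathbb{F}^{2\ell+\lambda}$ (each $v_m$ appears at most once and the sum is nonempty) and independent of the second, which is a function of the keys and $r$ alone; this is unconditional, needs no PRF assumption for this property, and yields the sharper bound $q^{-(2\ell+\lambda)}$. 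Perfect secrecy is essentially the paper's argument. For computational multi-evaluation your hybrid is the paper's strategy, but you have deferred exactly the step that carries all the difficulty: justifying the PRF reduction when the view already contains $k$, $\{k_{i,j}\}_{i\in S}$, and $F$ evaluated at linear combinations of the $k_{i,j}$. The paper discharges this by showing that for each fixed $h$ the per-party sums $\sum_j k_{i,2j+x_{h,j}}$ for $i\notin S$, together with $k$ and the revealed $k_{i,j}$, form a linearly independent set (this is where $x_h\neq a$ enters), and then uses the standing hypothesis $\gamma(\lambda)^{2\ell n}/|\mathcal{K}^{(\lambda)}|<1-\tfrac1\lambda$ to argue that conditioning a uniform key on lying outside that span degrades the PRF advantage by a factor of at most $\lambda$, with distinctness of the $r_h$ decoupling evaluations across $h$. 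You name all of these ingredients but do not assemble them, so that part of your write-up is a plan rather than a proof; note also that the hybrid should be phrased over the per-party key sums $\sum_j k_{i,2j+x_{h,j}}$ rather than over the individual $k_{i,j}$, since only the former occur in the adversary's view.
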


\begin{proof}
	\emph{Computational Correctness}: We first prove that evaluation at $x=a$ gives the correct result with probability $1$, i.e., $\mathsf{Rec}\left(\{\mathsf{Eval}(f_i, a, r)\}_{i\in [n]}\right) = \alpha$.
	Note that:
	\begin{align*}
		\textstyle{\sum}_{i=0}^n s_{i,0}&=\textstyle{\sum}_{i=0}^n\textstyle{\sum}_{j=0}^{\ell-1}\left(v_{i,2j+a_j}+F_1(k_{i,2j+a_j},r)\right) \\
		&=\textstyle{\sum}_{j=0}^{\ell-1}\textstyle{\sum}_{i=0}^n v_{i,2j+a_j}+\textstyle{\sum}_{i=0}^n\textstyle{\sum}_{j=0}^{\ell-1}F_1\left(k_{i,2j+a_j},r\right) \\
		&=\textstyle{\sum}_{j=0}^{\ell-1} v_{2j+a_j}+F_1\left(\textstyle{\sum}_{i=0}^n\textstyle{\sum}_{j=0}^{\ell-1}k_{i,2j+a_j},r\right) =\theta+F_1(k, r),
	\end{align*}
	hence, the output of $\mathsf{Rec}$ is:
	\begin{align*}
		\textstyle{\sum}_{i=0}^n s_{i,1}-F_2(k,r)	=\ &\textstyle{\sum}_{i=0}^n\textstyle{\sum}_{j=0}^{\ell-1}\left(\alpha_{i,2j+a_j}+F_2(k_{i,2j+a_j},r)\right)-F_2(k,r) \\
		=\ &\textstyle{\sum}_{j=0}^{\ell-1}\textstyle{\sum}_{i=0}^n\alpha_{i,2j+a_j}+\textstyle{\sum}_{i=0}^n\textstyle{\sum}_{j=0}^{\ell-1}F_2(k_{i,2j+a_j},r)-F_2(k,r) \\
		=\ &\textstyle{\sum}_{j=0}^{\ell-1}\alpha_{2j+a_j}+F_2\left(\textstyle{\sum}_{i=0}^n\textstyle{\sum}_{j=0}^{\ell-1}k_{i,2j+a_j},r\right)-F_2(k,r) \\
		=\ &\alpha+F_2(k,r)-F_2(k,r) = \alpha.
	\end{align*}
	Next, we prove that evaluation at $x\neq a$ is correct except with probability negligible in $\lambda$. Let $u_j=v_j+F_1(k_j,r)$ for $j=0,\ldots, 2\ell-1$, where $k_j=\sum_{i=1}^n k_{i, j}$. A simple calculation shows that:
	$$\textstyle{\sum}_{j=0}^{\ell}u_{2j+a_j}=\textstyle{\sum}_{j=0}^{\ell-1} \left(v_{2j+a_j}+F_1(k_{2j+a_j},r)\right)=\theta+F_1(k, r).$$
	Since $F_1$ is a PRF and $k_0,k_1,\ldots,k_{2\ell-1}$ are linearly independent, the vectors $u_0,u_1,\ldots,u_{2\ell-1}$ cannot be distinguished from random vectors in $\mathbb{F}^{2\ell+\lambda}$ except with probability negligible in $\lambda$. The probability that $2\ell$ random vectors are linearly independent in $\mathbb{F}^{2\ell+\lambda}$ is:
	\begin{align*}
		\textstyle{\prod}_{j=0}^{2\ell-1}\left(\frac{q^{2l+\lambda}-q^j}{q^{2\ell+\lambda}}\right)&=\textstyle{\prod}_{j=0}^{2\ell-1}\left(1-\frac{q^j}{q^{2\ell+\lambda}}\right)>1-\sum_{j=0}^{2\ell-1}\left(\frac{q^j}{q^{2\ell+\lambda}}\right) \\
		&=1-\textstyle{\frac{1}{q^{2\ell+\lambda}}}\left(\textstyle{\frac{q^{2\ell}-1}{q-1}}\right)>1-\textstyle{\frac{1}{q^\lambda}}=1-\text{negl}(\lambda).
	\end{align*}
	If the vectors $u_0,u_1,\ldots,u_{2\ell-1}$ are linearly independent, then there is no other linear combination of the $u_j$'s that result in $\theta+F_1(k,r)$, and thus, $\mathsf{Rec}$ outputs $0$ when given as inputs the outputs of $\mathsf{Eval}$ evaluated at $x\neq a$. Therefore, the output of $\mathsf{Rec}$ is $0$ except with probability negligible in $\lambda$.
	
	\emph{Perfect Secrecy}: Recall that $\mathsf{Gen}(a,\alpha,\lambda)$ outputs $(f_1,f_2,\ldots,f_n)$, where:
	$$f_i=(v_{i,0}, \ldots, v_{i,2\ell-1}, \theta, \alpha_{i,0}, \ldots, \alpha_{i,2\ell-1}, k_{i,0}, \ldots, k_{i,2\ell-1}, k).$$
	For $f_i$'s supplied by $n-1$ parties, which we assume, without loss of generality, to be the first $n-1$ parties, note that $v_{i,j}$ ($1\leq i\leq n-1$, $0\leq j\leq 2\ell-1$) and $\theta$ are independent elements (in the probabilistic sense) from the uniform distribution on $\mathbb{F}^{2\ell+\lambda}$, $\alpha_{i,j}$ ($1\leq i\leq n-1$, $0\leq j\leq 2\ell-1$) are independent elements from the uniform distribution on $\mathbb{F}$, while $k_{i,j}$ ($1\leq i\leq n-1$, $0\leq j\leq 2\ell-1$) and $k$ are $2\ell(n-1)+1$ linearly independent elements picked uniformly at random from $\mathcal{K}$. Thus, $(f_1,\ldots,f_{n-1})$ has the same distribution regardless of the value of $a\in\{0,1\}^\ell$ and $\alpha\in\mathbb{F}$.
	
	\emph{Computational Multi-evaluation}: Let $S\subset [n]$ such that $|S| < n$. We have already established that $\{f_i\}_{i\in S}$ has the same distribution for all $a\in\{0,1\}^\ell$ and $\alpha\in\mathbb{F}$. Assume that $x_1,\,x_2,\,\ldots,\,x_m \neq a$ and $r_1,\,r_2,\,\ldots,\,r_m\in R$ are distinct. Then, we get
	\begin{align*}
		\mathsf{Eval}(f_i, x_h, &r_h)= \left(\textstyle{\sum}_{j=0}^{\ell-1}\left(v_{i,2j+x_{h,j}} + F_1(k_{i,2j+x_{h,j}},r_h)\right),\right. \left.\textstyle{\sum}_{j=0}^{\ell-1}\left(\alpha_{i,2j+x_{h,j}} + F_2(k_{i,2j+x_{h,j}},r_h)\right),r_h,\theta,k\right) \\
		=\ &\left(\textstyle{\sum}_{j=0}^{\ell-1}v_{i,2j+x_{h,j}} + F_1\left(\textstyle{\sum}_{j=0}^{\ell-1}k_{i,2j+x_{h,j}},r_h\right),\right. \left.\textstyle{\sum}_{j=0}^{\ell-1}\alpha_{i,2j+x_{h,j}} + F_2\left(\textstyle{\sum}_{j=0}^{\ell-1}k_{i,2j+x_{h,j}},r_h\right),r_h,\theta,k\right).
	\end{align*}
	Since $\{f_i\}_{i\in S}$ has the same distribution regardless of the choice of $a$ and $\alpha$, the same holds for $\left(\{f_i\}_{i\in S}, \{\mathsf{Eval}(f_i, x_h, r_h)\}_{i\in S, h\in [m]}\right)$. 
	
	We observe that, since $x_h\neq a$ for all $1\leq h\leq m$, for any fixed $h$, the set
	$$\{k_{i,j}: i\in S,\ 0\leq j\leq 2\ell-1\}\cup \{k\}\cup \{\textstyle{\sum}_{j=0}^{\ell-1}k_{i,2j+x_{h,j}}: i\not\in S\}$$
	is a set of random linearly independent elements in $\mathcal{K}$. Hence, any non-zero linear combination of $\{\textstyle{\sum}_{j=0}^{\ell-1}k_{i,2j+x_{h,j}}: i\not\in S\}$ is a uniformly random element in $\mathcal{K}$ that lies outside the span of $\{k\}\cup \{k_{i,j}: i\in S,\ 0\leq j\leq 2\ell-1\}$.
	
	Since, by assumption, any element in $\mathcal{K}$ has order at most $\gamma(\lambda)$, the span of $2\ell (n-1)+1$ elements has size at most $\gamma(\lambda)^{2\ell (n-1)+1}<\gamma(\lambda)^{2\ell n}$. By our assumption, $\frac{\gamma(\lambda)^{2\ell n}}{|\mathcal{K}^{(\lambda)}|}<1-\frac{1}{\lambda}$, so the advantage of an adversary in distinguishing the PRF $F$ from random when the key is selected from outside the span of $\{k\}\cup \{k_{i,j}: i\in T,\ 0\leq j\leq 2\ell-1\}$ is increased by a factor of at most $\lambda$, and hence this advantage is still negligible in $\lambda$.
	
	Hence, given $\{f_i\}_{i\in S}$, the set $\{F(\textstyle{\sum}_{j=0}^{\ell-1}k_{i,2j+x_{h,j}},r_h)\}_{i\not\in S, h\in [m]}$ cannot be distinguished from uniformly random except with negligible probability. It follows that for all $i\not\in S$ and $h\in [m]$, $\textstyle{\sum}_{j=0}^{\ell-1}v_{i,2j+x_{h,j}} + F_1\left(\textstyle{\sum}_{j=0}^{\ell-1}k_{i,2j+x_{h,j}},r_h\right)$ and $\textstyle{\sum}_{j=0}^{\ell-1}\alpha_{i,2j+x_{h,j}} + F_2\left(\textstyle{\sum}_{j=0}^{\ell-1}k_{i,2j+x_{h,j}},r_h\right)$ are indistinguishable from independent uniform random elements of $\mathbb{F}^{2\ell+\lambda}$ and $\mathbb{F}$ respectively, except with probability negligible in $\lambda$. $\qed$
\end{proof}

\subsection{$t$-out-of-$n$ Multi-evaluation DPF}
In this section, we introduce the idea of an $\mathbb{F}$-key-homomorphic PRF. By assuming the existence of such PRFs, we extend the $n$-out-of-$n$ scheme in the previous subsection to a $t$-out-of-$n$ computational multi-evaluation threshold DPF scheme.

\begin{definition}[$\mathbb{F}$-key-homomorphic PRF]
	\emph{Let $\mathbb{F}$ be a field, $\mathcal{K}$ and $\mathbb{L}$ be extension fields of $\mathbb{F}$, and $F: \mathcal{K}\times \mathcal{X} \rightarrow \mathbb{L}^m$ be an efficiently computable function. We say that $F$ is an $\mathbb{F}$-key-homomorphic PRF if the following properties hold:
		\begin{enumerate}
			\item $F$ is a secure PRF,
			\item $\forall k_1, k_2 \in \mathcal{K}, x \in \mathcal{X}: F_{k_1 + k_2}(x) = F_{k_1}(x) + F_{k_2}(x),$
			\item $\forall c \in \mathbb{F}, k \in \mathcal{K}, x \in \mathcal{X}: F_{ck}(x) = c\cdot F_{k}(x).$
	\end{enumerate}}
\end{definition}

\begin{remark}
	Note that if $\mathcal{K}$ and $\mathbb{L}$ are fields with the same prime subfield $\mathbb{F}_p$, then a key-homomorphic PRF $F:\mathcal{K}\times \mathcal{X} \rightarrow \mathbb{L}^m$ satisfying (2) is always $\mathbb{F}_p$-key-homomorphic. Furthermore, since $\mathcal{K}$ is a finite field, we know that $(\mathcal{K},+)\cong\mathbb{F}_{p'}^{l}$ for some prime $p'$. Then, it follows from \Cref{KH-PRF_remark} that:
	$$|\mathcal{K}/p\mathcal{K}|^{-1}=
	\begin{cases}
		1/p^l & \text{if }p=p', \\
		1 & \text{otherwise}.
	\end{cases}
	$$
	Since $F$ is a secure PRF, $|\mathcal{K}/p\mathcal{K}|^{-1}$ is a negligible function of $\lambda$, thus it must be the case that $p'=p$, i.e. $\Char({\mathcal{K})}$ must be equal to $\Char(\mathbb{L})$.
\end{remark}

We will use an $\mathbb{F}$-key-homomorphic PRF family to produce a computational multi-evaluation threshold DPF scheme for the class of point functions $\mathcal{P}\left(\{0,1\}^\ell,\mathbb{L}\right)$, where $\mathbb{L}=\mathbb{F}_q$ is the finite field with cardinality $q$. Assume $|\mathbb{F}|\geq n+1$, and fix an injection $\iota: \{0,1,\ldots,n\}\to\mathbb{F}$. We will use this injection to identity elements in $\{0,1,\ldots,n\}$ with elements of $\mathbb{F}$. Note that this injection need not be a homomorphism. Let $\mathcal{F}=\{F^{(\lambda)}: \mathcal{K}^{(\lambda)}\times R\to\mathbb{L}^{2\ell+\lambda+1}\}$
be a family of $\mathbb{F}$-key-homomorphic PRFs such that the advantage of any polynomial-time adversary in distinguishing $F^{(\lambda)}$ from random is negligible in $\lambda$. As above, we write $F^{(\lambda)}=(F^{(\lambda)}_1, F^{(\lambda)}_2)$.

Again, we will make the assumption that the order of any element in $\mathcal{K}^{(\lambda)}$ is bounded by some polynomial $\gamma(\lambda)$, from which it follows, without loss of generality, that $\frac{\gamma(\lambda)^{2\ell n}}{|\mathcal{K}^{(\lambda)}|}<1-\frac{1}{\lambda}$ holds for all $\lambda$. Our scheme is a collection of three algorithms, $(\mathsf{Gen}(a,\alpha,\lambda), \mathsf{Eval}(f_i, x, r), \mathsf{Rec}(\{s_i: i\in T\}))$, which are defined as:

$$\bm{\mathsf{Gen}(a,\alpha,\lambda)}$$
\begin{enumerate}[topsep=-3pt]
	\item Choose $2\ell$ random vectors $v_0, v_1, v_2, \ldots, v_{2\ell-1}$ from $\mathbb{L}^{2\ell+\lambda}$.
	\item\label{item2} Compute Shamir shares $v_{i,j}\in\mathbb{L}^{2\ell+\lambda}$ ($1\leq i\leq n$, $0\leq j\leq 2\ell-1$) for $v_i$. To be precise, for each $0\leq j\leq 2\ell -1$, randomly choose polynomials $r_{j,h}(X)\in \mathbb{L}[X]$ ($1\leq h\leq 2\ell+\lambda$), each of degree $\leq t-1$, such that $r_{j,h}(0)$ is equal to the $h$-th coordinate of $v_j$, and let the $h$-th coordinate of $v_{i,j}$ be $r_{j,h}(i)$.
	\item Let $a = a_1a_2\ldots a_\ell$ and compute $\theta=\sum_{j=0}^{\ell-1} v_{2j+a_j}$.
	\item Choose $2\ell$ random elements $\alpha_0, \alpha_1, \ldots, \alpha_{2\ell-1}\in\mathbb{L}$ subject to the condition $\alpha=\sum_{j=0}^{\ell-1}\alpha_{2j+a_j}$.
	\item Compute Shamir shares $\alpha_{i,j}\in\mathbb{L}$ ($1\leq i\leq n$, $0\leq j\leq 2\ell-1$) for $\alpha_j$, as in Step \ref{item2} above.
	\item Choose $2\ell n$ linearly independent keys $k_{i,j}$ ($1\leq i\leq n$, $0\leq j\leq 2\ell-1$).
	\item Choose random polynomials $p_{i,j}(X)\in\mathcal{K}[X]$ ($1\leq i\leq n$, $0\leq j\leq 2\ell -1$), each of degree $\leq t-1$ such that $p_{i,j}(0)=k_{i,j}$, and let $k_{i,j,l}=p_{i,j}(l)$ ($1\leq l\leq n$). Let
	$k^{(l)}=\{(i,j,k_{i,j,l}):1\leq i\leq n,\ 0\leq j\leq 2\ell-1\}.$
	\item Compute $k=\sum_{i=1}^n \sum_{j=0}^{\ell-1} k_{i,2j+a_j}$.
	\item Output $f_i=(i, v_{i,0}, \ldots, v_{i,2\ell-1}, \theta, \alpha_{i,0}, \ldots, \alpha_{i,2\ell-1}, k_{i,0}, \ldots, k_{i,2\ell-1}, k^{(i)}, k).$
\end{enumerate}

$$\bm{\mathsf{Eval}(f_i, x, r)}$$
\begin{enumerate}[topsep=-3pt]
	\item Parse $f_i$ as $(i, v_{i,0}, \ldots, v_{i,2\ell-1}, \theta, \alpha_{i,0}, \ldots, \alpha_{i,2\ell-1}, k_{i,0}, \ldots, k_{i,2\ell-1}, k^{(i)}, k).$
	\item Let $x = x_1x_2\ldots x_\ell$. Compute $s_{i,0}=\textstyle{\sum}_{j=0}^{\ell-1}\left(v_{i,2j+x_j} + \textstyle{\sum}_{l=1}^n F_1(k_{l,2j+x_j,i},r)\right).$
	\item Compute $s_{i,1}=\textstyle{\sum}_{j=0}^{\ell-1}\left(\alpha_{i,2j+x_j} +  \textstyle{\sum}_{l=1}^n F_2(k_{l,2j+x_j,i},r)\right).$
	\item Output $s_i=(i, s_{i,0}, s_{i,1}, r, \theta, k)$
\end{enumerate}

$$\bm{\mathsf{Rec}(\{s_i: i\in T\})}$$
\begin{enumerate}[topsep=-3pt]
	\item Parse $s_i$ as $(i, s_{i,0}, s_{i,1}, r, \theta, k)$.
	\item Compute $S_{0,1}(X),\,\ldots,\,S_{0,2\ell+\lambda}(X)$ and $S_1(X)$, polynomials of degree $\leq t-1$ such that $S_{0,h}(i)$ is equal to the $h$-th coordinate of $s_{i,0}$ and $S_1(i)=s_{i,1}$ for all $i\in T$.
	\item If $\left(S_{0,1}(0),\ldots,S_{0,2\ell+\lambda}(0)\right)$ equals $\theta+F_1(k,r)$, output $S_1(0)-F_2(k,r)$ else output $0$.
\end{enumerate}

\begin{remark}
	Each party has a share size of
	$$\log n+(4\ell^2+2\lambda\ell+4\ell+\lambda)\log|\mathbb{L}|+(2\ell n+2\ell+1)\log|\mathcal{K}|$$
	(assuming we fix an ordering for the elements in $k^{(l)}$, and replace $(i,j,k_{i,j,l})$ by $k_{i,j,l}$)
	and the output of $\mathsf{Eval}$ has size
	$$\log n+(4\ell+2\lambda+1)\log|\mathbb{L}|+\log|\mathcal{K}|+\log|R|.$$
\end{remark}

\begin{theorem}
	\label{t_out_of_n_DPF}
	The above scheme is an $t$-out-of-$n$ computational multi-evaluation DPF scheme for sharing the class of point functions $\mathcal{P}\left(\{0,1\}^\ell,\mathbb{L}\right)$.
\end{theorem}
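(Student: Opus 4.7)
The plan is to follow the structural outline of the proof of Theorem \ref{n_out_of_n_DPF}, replacing additive secret sharing with Shamir secret sharing and exploiting the $\mathbb{F}$-key-homomorphism property to commute Lagrange interpolation with PRF evaluation. The three properties (computational correctness, perfect secrecy, and computational multi-evaluation) will be handled in sequence.

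For correctness at $x=a$ with any reconstructing set $T$ of size $t$, denote by $\lambda_i \in \mathbb{F}$ ($i\in T$) the Lagrange coefficients for evaluating a degree-$(t-1)$ polynomial at $0$ from values at $\iota(i)$. Applying $\sum_{i\in T}\lambda_i$ coordinatewise to the $s_{i,0}$'s recovers $\sum_{j} v_{2j+a_j} = \theta$ in the ``share'' part via Shamir interpolation, while the same Lagrange combination inside the PRF arguments gives $\sum_{i\in T}\lambda_i\, k_{l,2j+a_j,i} = k_{l,2j+a_j}$; the $\mathbb{F}$-key-homomorphism of $F_1$ then yields $F_1(k_{l,2j+a_j},r)$, and additive key-homomorphism collapses the double sum over $j$ and $l$ to $F_1(k,r)$. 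Hence $(S_{0,1}(0),\ldots,S_{0,2\ell+\lambda}(0)) = \theta + F_1(k,r)$, the equality test passes, and by the identical argument on $s_{i,1}$, $S_1(0) - F_2(k,r) = \alpha$. For $x\neq a$, I would reuse the argument from the proof of Theorem \ref{n_out_of_n_DPF} almost verbatim: define $u_j = v_j + F_1(\sum_{l=1}^n k_{l,j}, r)$, invoke PRF security together with linear independence of the effective keys $\sum_l k_{l,j}$ to conclude that the $u_j$'s are computationally indistinguishable from uniformly random vectors in $\mathbb{L}^{2\ell+\lambda}$, and apply the same $1-1/q^\lambda$ linear-independence count to bound the failure probability by $\text{negl}(\lambda)$.

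Perfect secrecy with $|S|<t$ reduces to the standard Shamir secrecy property: the shares $\{v_{i,j}\}_{i\in S,j}$, $\{\alpha_{i,j}\}_{i\in S,j}$, and $\{k_{l,j,i}\}_{i\in S,l,j}$ are marginally uniform over their respective codomains and statistically independent of the corresponding secrets $v_j$, $\alpha_j$, $k_{l,j}$. The extra components $\theta$, $k$, and the fresh $\{k_{i,j}\}_{i\in S}$ appearing in $f_i$ are each uniform unconditionally (as sums of independent uniform quantities, or as randomly chosen linearly independent keys). A short joint-distribution argument then shows that the distribution of $\{f_i\}_{i\in S}$ is independent of $(a,\alpha)$.

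For computational multi-evaluation, I would combine the secrecy observation with the derivation used for correctness to note that the only quantities in the joint view $(\{f_i\}_{i\in S}, \{\mathsf{Eval}(f_i,x_h,r_h)\}_{i\in[n],h\in[m]})$ that can depend on $(a,\alpha)$, beyond the already-analyzed $\{f_i\}_{i\in S}$ part, are the PRF outputs at the effective keys $K_{h,j} := \sum_l k_{l,\,2j+x_{h,j}}$ appearing (after collapsing via key-homomorphism) in the outputs of honest parties. The main obstacle, as in the proof of Theorem \ref{n_out_of_n_DPF}, is to verify that the set $\{K_{h,j}\}_{h,j}$, taken together with $\{k\}\cup\{k_{i,j}:i\in S\}$ and the visible Shamir shares $\{k_{l,j,i}\}_{i\in S}$, consists of random linearly independent elements of $\mathcal{K}$ with overwhelming probability, so that PRF security can be invoked on each fresh effective key. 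Once this is established, the bounded-order assumption $\gamma(\lambda)^{2\ell n}/|\mathcal{K}|<1-1/\lambda$ produces at most a $\lambda$-factor loss in distinguishing advantage, which remains negligible, and computational indistinguishability of the two joint distributions follows. The hard part is the bookkeeping: carefully enumerating which linear combinations of key shares the adversary can reconstruct from $\{k^{(i)}: i\in S\}$ and the honest-party evaluations, versus which remain information-theoretically hidden and hence qualify as ``fresh'' for the PRF reduction.
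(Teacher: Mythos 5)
Your treatment of computational correctness and perfect secrecy is correct and is essentially the paper's own argument: the Lagrange combination commuting with the PRF via $\mathbb{F}$-key-homomorphism is precisely the content of \Cref{lagrange_interpolation_corollary}, and both the $x\neq a$ case and the secrecy argument are imported from \Cref{n_out_of_n_DPF} exactly as the paper does.

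The gap is in computational multi-evaluation. Your plan is to show that the effective keys appearing in honest parties' outputs, together with $\{k\}\cup\{k_{i,j}: i\in S\}$ and the visible Shamir shares, form a set of random linearly independent elements of $\mathcal{K}$, and then to invoke PRF security on each. That claim is false as stated. The effective key in party $i$'s evaluation at input $x_h$ is $\sum_{j}\sum_l k_{l,2j+x_{h,j},i}$, i.e., the evaluation at $i$ of a single degree-$(t-1)$ polynomial $p_h(X)=\sum_j\sum_l p_{l,2j+x_{h,j}}(X)$; once $t$ of these values are fixed, every remaining one is an $\mathbb{F}$-linear combination of them, so for $n>t$ the set is heavily linearly dependent. (Moreover, writing $\kappa_j=\sum_l k_{l,j}$, one has $k=\sum_j\kappa_{2j+a_j}$, which can lie in the span of the $K_{h,j}=\kappa_{2j+x_{h,j}}$ whenever the inputs $x_h$ collectively cover the bits of $a$.) You acknowledge that the ``bookkeeping'' of which keys are fresh versus determined is the hard part, but that bookkeeping is the proof, and your outline does not supply it. The paper resolves it by induction on the set $U\supseteq S$ of parties whose evaluations are revealed: for $|U|<t$ everything follows from perfect secrecy; at $|U|=t$ exactly one genuinely fresh effective key appears per input, it is uniform outside the span of what the adversary already holds, and PRF security (with the $\gamma(\lambda)^{2\ell n}/|\mathcal{K}^{(\lambda)}|$ bound controlling the loss) makes that one evaluation pseudorandom; and for $|U|>t$ the new party's output is shown, via \Cref{lagrange_interpolation_lemma} and $\mathbb{F}$-key-homomorphism, to be a deterministic function of outputs already revealed, so it leaks nothing further. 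Without this (or an equivalent) decomposition of the adversary's view, your reduction to PRF security does not go through.
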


Before proving the above theorem, we prove two useful lemmas:

\begin{lemma}
	\label{lagrange_interpolation_lemma}
	Let $x_0, x_1,\ldots, x_t\in\mathbb{F}$ be distinct, $\mathcal{K}$ be an extension field of $\mathbb{F}$ and let $p(X)\in\mathcal{K}[X]$ be a polynomial of degree $\leq t-1$. Then there exists $c_1,c_2,\ldots,c_t\in\mathbb{F}$ such that
	$$p(x_0)=c_1p(x_1)+\cdots+c_tp(x_t).$$ 
\end{lemma}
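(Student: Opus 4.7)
\medskip

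The plan is to invoke classical Lagrange interpolation, with the one subtlety being to track that the interpolation coefficients stay in the base field $\mathbb{F}$ even though $p$ has coefficients in the (possibly strictly larger) extension field $\mathcal{K}$. The key observation is that the nodes $x_0, x_1, \ldots, x_t$ all lie in $\mathbb{F}$, so the Lagrange basis polynomials built from them have coefficients in $\mathbb{F}$, and no passage through $\mathcal{K}$ is ever required when forming the coefficients $c_i$.

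Concretely, I would first define the Lagrange basis polynomials
$$L_i(X) \;=\; \prod_{\substack{1 \leq j \leq t \\ j \neq i}} \frac{X - x_j}{x_i - x_j} \;\in\; \mathbb{F}[X],$$
which make sense because distinctness of the $x_j$'s ensures invertibility of each $x_i - x_j$ in $\mathbb{F}$, and each $L_i$ has degree $t-1$. Next, view the identity
$$p(X) \;=\; \sum_{i=1}^{t} p(x_i)\, L_i(X)$$
as a statement in $\mathcal{K}[X]$: both sides have degree at most $t-1$, and they agree at the $t$ distinct points $x_1, \ldots, x_t$ (since $L_i(x_j) = \delta_{ij}$). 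Because $\mathcal{K}[X]$ is an integral domain, any nonzero polynomial of degree $\leq t - 1$ has at most $t - 1$ roots in $\mathcal{K}$, so the difference of the two sides must be the zero polynomial. Evaluating the identity at $X = x_0$ then yields
$$p(x_0) \;=\; \sum_{i=1}^{t} L_i(x_0)\, p(x_i),$$
so we may set $c_i := L_i(x_0) \in \mathbb{F}$, and we are done.

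There is no real obstacle here; the lemma is essentially a bookkeeping statement about where the Lagrange coefficients live. The only point worth highlighting in the write-up is that the uniqueness step is carried out inside $\mathcal{K}[X]$ rather than $\mathbb{F}[X]$, which is what allows $p$ to have coefficients in the extension while the $c_i$ remain in the base field.
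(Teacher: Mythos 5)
Your proof is correct and follows essentially the same route as the paper: write $p$ via Lagrange interpolation through the nodes $x_1,\ldots,x_t$, evaluate at $x_0$, and observe that the resulting coefficients $c_i = L_i(x_0)$ lie in $\mathbb{F}$ because all the nodes do. The only difference is that you explicitly justify the interpolation identity inside $\mathcal{K}[X]$ via the degree-and-roots argument, which the paper takes for granted.
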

\begin{proof}
	By Lagrange interpolation,
	$$p(X)=p(x_1)\cdot\frac{\prod_{i\neq 1}(X-x_i)}{\prod_{i\neq 1}(x_1-x_i)}+\cdots+p(x_t)\cdot\frac{\prod_{i\neq t}(X-x_i)}{\prod_{i\neq t}(x_t-x_i)},$$
	so
	$$p(x_0)=p(x_1)\cdot\frac{\prod_{i\neq 1}(x_0-x_i)}{\prod_{i\neq 1}(x_1-x_i)}+\cdots+p(x_t)\cdot\frac{\prod_{i\neq t}(x_0-x_i)}{\prod_{i\neq t}(x_t-x_i)}.$$
	It is clear that $c_j=\frac{\prod_{i\neq j}(x_0-x_i)}{\prod_{i\neq j}(x_j-x_i)}$ lies in the subfield $\mathbb{F}$ since $x_0, x_1,\ldots, x_t\in\mathbb{F}$. $\qed$
\end{proof}

\begin{lemma}
	\label{lagrange_interpolation_corollary}
	Let $F:\mathcal{K}\times X\to\mathbb{L}$ be an $\mathbb{F}$-key-homomorphic PRF, $x_0, x_1,\ldots, x_t\in\mathbb{F}$ be distinct, and $p(X)\in\mathcal{K}[X]$ be a polynomial of degree $\leq t-1$. Then
	\begin{enumerate}[label=(\alph*)]
		\item $F(p(x_0),r)$ is an $\mathbb{F}$-linear combination of $F(p(x_i),r)$ $(1\leq i\leq t)$,
		\item there exists a polynomial $\delta(X)\in\mathbb{L}[X]$ of degree $\leq t-1$ such that $\delta(x_i)=F(p(x_i),r)$ for all $0\leq i\leq t$.
	\end{enumerate}
\end{lemma}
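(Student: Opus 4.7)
The plan is to derive both parts directly from Lemma \ref{lagrange_interpolation_lemma} together with the $\mathbb{F}$-key-homomorphic properties of $F$, which are precisely what is needed to push the $\mathbb{F}$-linear combination across $F(\cdot, r)$.

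For part (a), I would first invoke Lemma \ref{lagrange_interpolation_lemma} to obtain constants $c_1, \ldots, c_t \in \mathbb{F}$, depending only on $x_0, x_1, \ldots, x_t$, such that
\[
p(x_0) = c_1 p(x_1) + c_2 p(x_2) + \cdots + c_t p(x_t)
\]
as elements of $\mathcal{K}$. Applying $F(\cdot, r)$ to both sides and using the additive key-homomorphism $F_{k_1+k_2}(r) = F_{k_1}(r) + F_{k_2}(r)$ together with the $\mathbb{F}$-scalar key-homomorphism $F_{ck}(r) = c \cdot F_k(r)$ for $c \in \mathbb{F}$ (both parts of the definition of an $\mathbb{F}$-key-homomorphic PRF), we obtain
\[
F(p(x_0), r) = c_1 F(p(x_1), r) + \cdots + c_t F(p(x_t), r),
\]
which is exactly the required $\mathbb{F}$-linear combination.

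For part (b), I would define $\delta(X) \in \mathbb{L}[X]$ to be the unique polynomial of degree $\leq t-1$ produced by Lagrange interpolation through the $t$ points $\{(x_i, F(p(x_i), r))\}_{i=1}^{t}$; this is well-defined because $x_1, \ldots, x_t \in \mathbb{F} \subseteq \mathbb{L}$ are distinct and $F(p(x_i), r) \in \mathbb{L}$. By construction $\delta(x_i) = F(p(x_i), r)$ for $1 \leq i \leq t$, so the only nontrivial check is that $\delta(x_0) = F(p(x_0), r)$. Applying Lemma \ref{lagrange_interpolation_lemma} to $\delta$ itself gives
\[
\delta(x_0) = c_1 \delta(x_1) + \cdots + c_t \delta(x_t),
\]
and the coefficients $c_j$ here are identical to those from part (a), since they are determined solely by the interpolation nodes $x_0, x_1, \ldots, x_t \in \mathbb{F}$ and not by the polynomial being interpolated. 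Substituting $\delta(x_j) = F(p(x_j), r)$ and applying part (a) yields $\delta(x_0) = F(p(x_0), r)$, completing the proof.

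There is essentially no real obstacle here; the mild subtlety worth emphasizing is the coincidence of the Lagrange coefficients in the two expressions, which is what allows part (b) to be deduced from part (a) rather than requiring a separate argument. Everything else is a mechanical application of the two homomorphism axioms.
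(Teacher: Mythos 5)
Your proposal is correct and follows essentially the same route as the paper: part (a) by pushing the Lagrange coefficients $c_1,\ldots,c_t\in\mathbb{F}$ from Lemma \ref{lagrange_interpolation_lemma} through $F(\cdot,r)$ using the two homomorphism axioms, and part (b) by taking $\delta$ to be the Lagrange interpolant through $\{(x_i,F(p(x_i),r))\}_{i=1}^t$ and checking $\delta(x_0)=F(p(x_0),r)$ via the coincidence of the interpolation coefficients. The paper merely writes $\delta$ out explicitly in Lagrange form rather than invoking uniqueness, which is an immaterial difference.
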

\begin{proof}
	\begin{enumerate}[label=(\alph*)]
		\item Since $F$ is $\mathbb{F}$-key-homomorphic,
		$$F(p(x_0),r)=c_1\cdot F(p(x_1),r)+\cdots+c_t\cdot F(p(x_t),r),$$
		where $c_1,\ldots c_t\in\mathbb{F}$ are as in \Cref{lagrange_interpolation_lemma}.
		\item Let $\delta(X)$ be the polynomial
		$$\delta(X)=F(p(x_1),r)\cdot\frac{\prod_{i\neq 1}(X-x_i)}{\prod_{i\neq 1}(x_1-x_i)}+\cdots+F(p(x_t),r)\cdot\frac{\prod_{i\neq t}(X-x_i)}{\prod_{i\neq t}(x_t-x_i)}.$$
		It is clear that $\delta(x_i)=F(p(x_i),r)$ for $1\leq i\leq t$. And, by the proof of (a),
		\begin{align*}
			F(p(x_0),r)&=F(p(x_1),r)\cdot\frac{\prod_{i\neq 1}(x_0-x_i)}{\prod_{i\neq 1}(x_1-x_i)}+\cdots+F(p(x_t),r)\cdot\frac{\prod_{i\neq t}(x_0-x_i)}{\prod_{i\neq t}(x_t-x_i)} \\
			&=\delta(x_0). \qquad \qquad \qquad \qquad \qquad \qquad \qquad \qquad \qquad \qquad \qquad \quad \qed 
		\end{align*}
	\end{enumerate}
\end{proof}

\begin{proof}[of \Cref{t_out_of_n_DPF}]
	\emph{Computational Correctness}: Let $T$ be a subset of $[n]$ of size $t$. Without loss of generality, let us assume $T=[t]$. We start by proving that
	$$\mathsf{Rec}\left(\{\mathsf{Eval}(f_i, a, r)\}_{i\in T}\right) = \alpha.$$
	Note that for all $1\leq i\leq t$ and $1\leq h\leq 2\ell+\lambda$, 
	\begin{align*}
		S_{0,h}(i)=s_{i,0}[h]&=\textstyle{\sum}_{j=0}^{\ell-1}\left(v_{i,2j+a_j}[h]+\textstyle{\sum}_{l=1}^n F_1(k_{l,2j+a_j,i},r)[h]\right) \\
		&=\textstyle{\sum}_{j=0}^{\ell-1}r_{2j+a_j,h}(i)+\textstyle{\sum}_{j=0}^{\ell-1}\textstyle{\sum}_{l=1}^n F_1(p_{l,2j+a_j}(i),r)[h] \\
		&=\textstyle{\sum}_{j=0}^{\ell-1}r_{2j+a_j,h}(i)+F_1(\textstyle{\sum}_{j=0}^{\ell-1}\textstyle{\sum}_{l=1}^n p_{l,2j+a_j}(i),r)[h].
	\end{align*}
	Let $p(X)=\sum_{j=0}^{\ell-1}\sum_{l=1}^n p_{l,2j+a_j}(X)$, a polynomial of degree $\leq t-1$. By \Cref{lagrange_interpolation_corollary}(b), there exists a polynomial $\delta_h(X)\in\mathbb{L}[X]$ of degree $\leq t-1$, such that $\delta_h(i)=F_1(p(i),r)[h]$ for all $0\leq i\leq t$. Since $S_{0,h}(X)$ agrees with $\sum_{j=0}^{\ell-1}r_{2j+a_j,h}(X)+\delta_h(X)$ at the $t$ points $X=1,2,\ldots, t$, and both of them are polynomials of degree $\leq t-1$, they must be identical, i.e. $S_{0,h}(X)=\textstyle{\sum}_{j=0}^{\ell-1}r_{2j+a_j,h}(X)+\delta_h(X).$ Therefore,
	\begin{align*}
		S_{0,h}(0)&=\textstyle{\sum}_{j=0}^{\ell-1}r_{2j+a_j,h}(0)+\delta_h(0) \\
		&=\textstyle{\sum}_{j=0}^{\ell-1}v_{2j+a_j}[h]+F_1(p(0),r)[h] \\
		&=\textstyle{\sum}_{j=0}^{\ell-1}v_{2j+a_j}[h]+F_1(\textstyle{\sum}_{j=0}^{\ell-1}\textstyle{\sum}_{l=1}^n p_{l,2j+a_j}(0),r)[h] \\
		&=\theta[h]+F_1(\textstyle{\sum}_{j=0}^{\ell-1}\textstyle{\sum}_{l=1}^n k_{l,2j+a_j},r)[h] \\
		&=\theta[h]+F_1(k,r)[h],
	\end{align*}
	i.e., $\left(S_{0,1}(0),\ldots,S_{0,2\ell+\lambda}(0)\right)=\theta+F_1(k,r)$. The output of $\mathsf{Rec}$ is thus $S_1(0)-F_2(k,r)$, which, by a similar argument as above, is equal to:
	$$(\textstyle{\sum}_{j=0}^{\ell-1}\alpha_{2j+a_j}+F_2(k,r))-F_2(k,r)=\textstyle{\sum}_{j=0}^{\ell-1}\alpha_{2j+a_j}=\alpha.$$
	
	Next, we prove that evaluation at $x\neq a$ is correct except with probability negligible in $\lambda$. Let $u_j=v_j+F_1(k_j,r)$ for $j=0,\ldots, 2\ell-1$, where $k_j=\sum_{i=1}^n k_{i, j}$. Again, by a similar argument as above,
	\begin{align*}
		\textstyle{\sum}_{j=0}^{\ell-1}u_{2j+a_j}&=\textstyle{\sum}_{j=0}^{\ell-1}\left(v_{2j+a_j}+F_1(\textstyle{\sum}_{l=1}^n k_{l,2j+a_j},r)\right)=\theta+F_1(k,r),
	\end{align*}
	and evaluation at $x\neq a$ gives:
	\[
	\left(S_{0,1}(0),\ldots,S_{0,2\ell+\lambda}(0)\right) =\textstyle{\sum}_{j=0}^{\ell-1}\left(v_{2j+x_j}+F_1(\textstyle{\sum}_{l=1}^n k_{l,2j+x_j},r)\right) =\textstyle{\sum}_{j=0}^{\ell-1}u_{2j+x_j}.
	\]
	The result now follows by following the proof for computational correctness in \Cref{n_out_of_n_DPF}.
	
	\emph{Perfect Secrecy}: $\mathsf{Gen}(a,\alpha,\lambda)$ outputs $(f_1,f_2,\ldots,f_n)$, where
	$$f_i=(i, v_{i,0}, \ldots, v_{i,2\ell-1}, \theta, \alpha_{i,0}, \ldots, \alpha_{i,2\ell-1}, k_{i,0}, \ldots, k_{i,2\ell-1}, k^{(i)}, k)$$
	and $k^{(i)}=\{(i',j',k_{i',j',i}):1\leq i'\leq n,\ 0\leq j'\leq 2\ell-1\}.$ \\
	Suppose we are given $f_i$ from $t-1$ parties, which we will assume, without loss of generality, to be the first $t-1$ parties. 
	Any $t-1$ Shamir shares of a $t$-out-of-$n$ threshold scheme are independently and uniformly distributed. Thus, $v_{i,j}$ ($1\leq i\leq t-1$, $0\leq j\leq 2\ell-1$) and $\theta$ are independently and uniformly distributed. The same holds for $\alpha_{i,j}$ ($1\leq i\leq t-1$, $0\leq j\leq 2\ell-1$). $k_{i,j}$ ($1\leq i\leq t-1$, $0\leq j\leq 2\ell-1$) and $k$ are $2\ell(t-1)+1$ linearly independent elements picked uniformly at random from $\mathcal{K}$, while $k_{i',j',i}$ ($1\leq i'\leq n$, $0\leq j'\leq 2\ell-1$, $1\leq i\leq t-1$) are independently and uniformly distributed. Thus, the distribution of $(f_1,\ldots,f_{t-1})$ does not depend on $a$ or $\alpha$.
	
	\emph{Computational Multi-evaluation}: Let $S\subset [n]$ such that $|S| < t$. Assume $x_1,\,x_2,\,\ldots,\,x_m \neq a$, and $r_1,\,r_2,\,\ldots,\,r_m\in R$ are distinct. We get
	\begin{align*}
		\mathsf{Eval}(f_i, x_h, r_h) =\ &\left(i, \textstyle{\sum}_{j=0}^{\ell-1}\left(v_{i,2j+x_{h,j}}+\textstyle{\sum}_{l=1}^n F_1(k_{l,2j+x_{h,j},i},r_h)\right),\right. \\
		&\qquad\left.\textstyle{\sum}_{j=0}^{\ell-1}\left(\alpha_{i,2j+x_{h,j}}+\textstyle{\sum}_{l=1}^n F_2(k_{l,2j+x_{h,j},i},r_h)\right),r_h,\theta,k\right) \\
		=\ &\left(i, \textstyle{\sum}_{j=0}^{\ell-1}v_{i,2j+x_{h,j}}+F_1(\textstyle{\sum}_{j=0}^{\ell-1}\textstyle{\sum}_{l=1}^n k_{l,2j+x_{h,j},i},r_h),\right. \\
		&\qquad\left.\textstyle{\sum}_{j=0}^{\ell-1}\alpha_{i,2j+x_{h,j}}+F_2(\textstyle{\sum}_{j=0}^{\ell-1}\textstyle{\sum}_{l=1}^n k_{l,2j+x_{h,j},i},r_h),r_h,\theta,k\right).
	\end{align*}
	
	Let $S\subseteq U\subseteq [n]$. We shall prove by induction on $|U|$ that the distribution of \\$\left(\{f_i\}_{i\in S}, \{\mathsf{Eval}(f_i, x_h, r_h)\}_{i\in U, h\in [m]}\right)$ is computationally indistinguishable regardless of the choice of $a$ and $\alpha$. The base case simply follows from perfect secrecy; since the distribution of $\{f_i\}_{i\in S}$ is independent of the choice of $a$ and $\alpha$, so is the distribution of $\left(\{f_i\}_{i\in S}, \{\mathsf{Eval}(f_i, x_h, r_h)\}_{i\in S, h\in [m]}\right)$.
	
	Suppose, for some $S\subseteq U'\subset [n]$, that $\left(\{f_i\}_{i\in S}, \{\mathsf{Eval}(f_i, x_h, r_h)\}_{i\in U', h\in [m]}\right)$ is computationally indistinguishable regardless of the choice of $a$ and $\alpha$. Let $u\not\in U'$, and let $U^*=U'\cup\{u\}$. We consider the following three cases.\\[1.5mm]
	\textbf{Case 1:} $|U'|<t-1$, i.e. $|U^*|<t$. It follows from perfect secrecy that the distribution of\\ $\left(\{f_i\}_{i\in S}, \{\mathsf{Eval}(f_i, x_h, r_h)\}_{i\in U^*, h\in [m]}\right)$ is independent of $a$ and $\alpha$.\\[1.5mm]
	\textbf{Case 2:} $|U'|=t-1$, i.e. $|U^*|=t$. Assume we are given:
	$$\left(\{f_i\}_{i\in U'}, \{\mathsf{Eval}(f_i, x_h, r_h)\}_{i\in U', h\in [m]}\right),$$
	whose distribution is independent of $a$ and $\alpha$ by perfect secrecy.
	
	Fix some $h\in [m]$. Note that $\textstyle{\sum}_{j=0}^{\ell-1}\textstyle{\sum}_{l=1}^n k_{l,2j+x_{h,j},i}$ ($i\in U^*$) are Shamir shares of $\textstyle{\sum}_{j=0}^{\ell-1}\textstyle{\sum}_{l=1}^n k_{l,2j+x_{h,j}}$, which is randomly and uniformly distributed as an element of $\mathcal{K}$ outside the span of $\{k_{i,j}: i\in U',\ 0\leq j\leq 2\ell-1\}\cup \{k\}$. It follows that the advantage of an adversary in distinguishing $F(\textstyle{\sum}_{j=0}^{\ell-1}\textstyle{\sum}_{l=1}^n k_{l,2j+x_{h,j}}, r)$ from random is negligible in $\lambda$, thus the same holds for $F(\textstyle{\sum}_{j=0}^{\ell-1}\textstyle{\sum}_{l=1}^n k_{l,2j+x_{h,j},u},r)$ (which, by \Cref{lagrange_interpolation_corollary}(a), is an $\mathbb{F}$-linear combination of $F(\textstyle{\sum}_{j=0}^{\ell-1}\textstyle{\sum}_{l=1}^n k_{l,2j+x_{h,j}}, r)$ and $F(\textstyle{\sum}_{j=0}^{\ell-1}\textstyle{\sum}_{l=1}^n k_{l,2j+x_{h,j},i}, r)$ for $i\in U'$).
	
	Thus, even with knowledge of $\left(\{f_i\}_{i\in U'}, \{\mathsf{Eval}(f_i, x_h, r_h)\}_{i\in U', h\in [m]}\right)$, the distributions\\ $\textstyle{\sum}_{j=0}^{\ell-1}v_{u,2j+x_{h,j}}+F_1(\textstyle{\sum}_{j=0}^{\ell-1}\textstyle{\sum}_{l=1}^n k_{l,2j+x_{h,j},u},r_h)$ and $\textstyle{\sum}_{j=0}^{\ell-1}\alpha_{u,2j+x_{h,j}}+F_2(\textstyle{\sum}_{j=0}^{\ell-1}\textstyle{\sum}_{l=1}^n k_{l,2j+x_{h,j},u},r_h)$ are indistinguishable from uniformly random, except with probability negligible in $\lambda$.\\[1.5mm]
	\textbf{Case 3:} $|U'|\geq t$, i.e. $|U^*|>t$. Assume we are given:
	$$\left(\{f_i\}_{i\in S}, \{\mathsf{Eval}(f_i, x_h, r_h)\}_{i\in U', h\in [m]}\right),$$
	whose distribution is computationally independent of $a$ and $\alpha$ by the induction hypothesis. Since $\textstyle{\sum}_{j=0}^{\ell-1}v_{i,2j+x_{h,j}}+F_1(\textstyle{\sum}_{j=0}^{\ell-1}\textstyle{\sum}_{l=1}^n k_{l,2j+x_{h,j},i},r_h)$ are Shamir shares of $\textstyle{\sum}_{j=0}^{\ell-1}v_{2j+x_{h,j}}+F_1(\textstyle{\sum}_{j=0}^{\ell-1}\textstyle{\sum}_{l=1}^n k_{l,2j+x_{h,j}},r_h)$, it follows from \Cref{lagrange_interpolation_lemma} that for any $u_1,\ldots, u_t\in U'$, there exists $c_1, \ldots, c_t\in\mathbb{F}$ such that:
	\begin{align*}
		&\textstyle{\sum}_{j=0}^{\ell-1}v_{u,2j+x_{h,j}}+F_1(\textstyle{\sum}_{j=0}^{\ell-1}\textstyle{\sum}_{l=1}^n k_{l,2j+x_{h,j},u},r_h) \\
		=\ &c_1\cdot\left(\textstyle{\sum}_{j=0}^{\ell-1}v_{u_1,2j+x_{h,j}}+F_1(\textstyle{\sum}_{j=0}^{\ell-1}\textstyle{\sum}_{l=1}^n k_{l,2j+x_{h,j},u_1},r_h)\right) \\
		&\qquad+\cdots+c_t\cdot\left(\textstyle{\sum}_{j=0}^{\ell-1}v_{u_t,2j+x_{h,j}}+F_1(\textstyle{\sum}_{j=0}^{\ell-1}\textstyle{\sum}_{l=1}^n k_{l,2j+x_{h,j},u_t},r_h)\right).
	\end{align*}
	A similar argument shows that $\textstyle{\sum}_{j=0}^{\ell-1}\alpha_{u,2j+x_{h,j}}+F_2(\textstyle{\sum}_{j=0}^{\ell-1}\textstyle{\sum}_{l=1}^n k_{l,2j+x_{h,j},u},r_h)$ (i.e., $\mathsf{Eval}(f_u, x_h, r_h)$) is determined by $\left(\{f_i\}_{i\in S}, \{\mathsf{Eval}(f_i, x_h, r_h)\}_{i\in U', h\in [m]}\right)$. $\qed$
\end{proof}

\section{Function-private CDS}\label{sec4}
In this section, we introduce the concept of a \textit{function-private} CDS (FPCDS) scheme. We start with an informal description. Suppose we have $k$ parties $P_1, \ldots, P_k$. Let $h:\mathcal{C}\to\{0,1\}$, where $\mathcal{C} = \mathcal{C}_1 \times \mathcal{C}_2 \times \ldots \times \mathcal{C}_k$, be a Boolean condition function that lies in a family of Boolean condition functions $\mathcal{H}$. An FPCDS scheme for $h \in \mathcal{H}$ with secret domain $\mathcal{S}$ consists of a dealer $\mathcal{D}$, $k$ parties $P_1, \ldots, P_k$ and a third party, called Carol, which possesses an algorithm, $\mathsf{Carol}$.

The dealer runs a randomized $\mathsf{Gen}$ algorithm with the inputs $h$ and $s$, and obtains $w_1, w_2, \ldots, w_k$. For $1 \leq j \leq k$, $\mathcal{D}$ sends $P_j$ the output portion $w_j$. Then, each player $P_j$ chooses some $c_j \in \mathcal{C}_j$ as their portion of the condition. Next, $P_j$ sends some message $m_j$ to Carol where $m_j = \mathsf{P}_j(c_j, w_j)$ is the output of some party-specific algorithm run on their part $c_j$ of the input $c$ and their part $w_j$ of the $\mathsf{Gen}$ output. Upon receiving the messages $m_j$, Carol runs $\mathsf{Carol}(m_1, m_2, \ldots, m_k)$; it accepts (and outputs the secret) or rejects based on the output of its $\mathsf{Carol}$ algorithm. This scheme satisfies certain correctness, privacy and secrecy properties, as detailed in the following definition:

\begin{definition}\label{FPCDS}
\emph{Let $\mathcal{H}$ be a family of Boolean condition functions, where each $h\in\mathcal{H}$ is a function from $\mathcal{C}=\mathcal{C}_1\times\cdots\times\mathcal{C}_k$ to $\{0,1\}$. A function-private CDS scheme for the family $\mathcal{H}$ of condition functions and secret domain $\mathcal{S}$ is defined as a collection of algorithms $(\mathsf{Gen}, \mathsf{P}_1,\ldots, \mathsf{P}_k, \mathsf{Carol})$ such that:}
	\begin{itemize} 
		\item \emph{$\mathsf{Gen}$ is a randomized algorithm that takes two inputs, $h\in\mathcal{H}$ and $s\in\mathcal{S}$, and generates $k$ shares $\{w_i\}_{i=1}^k$.
			\item For each $i\in[k]$, $\mathsf{P}_i$ is a deterministic algorithm that takes two inputs, a share $w_i$ and $c_i\in C_i$, and outputs a message $m_i$.
			\item $\mathsf{Carol}$ is a deterministic algorithm that takes $m_1,\ldots, m_k$ as inputs and outputs either an element of $\mathcal{S}$ or $\bot$.}
	\end{itemize}
\emph{These algorithms satisfies the following four conditions:}
\begin{itemize}
	\item Perfect Correctness: \emph{For every $h\in\mathcal{H}$, $s \in \mathcal{S}$, and $c = (c_1, c_2, \ldots, c_k)\in \mathcal{C}$, when $\mathsf{Gen}(h, s) = (w_1, w_2, \ldots, w_k)$ and $m_j = \mathsf{P}_j(c_j, w_j)$,
		$$\mathsf{Carol}(m_1, m_2, \ldots, m_k) =
		\begin{cases}
			s & \text{if }h(c)=1, \\
			\bot & \text{if }h(c)=0.
		\end{cases}
		$$
	}
	\item Perfect Secrecy: \emph{Fix $h \in \mathcal{H}$. For every $c = (c_1, c_2, \ldots, c_k) \in \mathcal{C}$, and any pair of secrets $s, s' \in \mathcal{S}$, let $\mathsf{Gen}(h, s) = (w_1^{(s)}, w_2^{(s)}, \ldots, w_k^{(s)})$, $\mathsf{Gen}(h, s') = (w_1^{(s')}, w_2^{(s')}, \ldots, w_k^{(s')})$, $m_j^{(s)} = \mathsf{P}_j(c_j, w_j^{(s)})$ and $m_j^{(s')} = \mathsf{P}_j(c_j, w_j^{(s')}).$ If $h(c) = 0$, then $$(m_1^{(s)}, m_2^{(s)}, \ldots, m_k^{(s)}) \text{ and } (m_1^{(s')}, m_2^{(s')}, \ldots, m_k^{(s')})$$ are perfectly indistinguishable.
	}
	\item Perfect Input Privacy: \emph{Let $h\in\mathcal{H}$ and $s \in \mathcal{S}$. Let $c = (c_1, c_2, \ldots, c_k) \in \mathcal{C}$ and $c' = (c_1', c_2', \ldots, c_k') \in \mathcal{C}$. Let $\mathsf{Gen}(h, s) = (w_1, w_2, \ldots, w_k)$, $m_j=\mathsf{P}_j(c_j,w_j)$ and $m_j'=\mathsf{P}_j(c_j',w_j)$. If $h(c)=h(c')$, then
	$$(m_1, \ldots, m_k)\text{ and } (m_1', \ldots, m_k')$$ are perfectly indistinguishable.
	}
	\item Perfect Function Privacy: \emph{Fix $c = (c_1, c_2, \ldots, c_k) \in \mathcal{C}$, $s \in \mathcal{S}$ and $i\in[k]$. Let $h, h' \in \mathcal{H}$ such that $h(c) = h'(c)$ and such that for every $c_i'\in \mathcal{C}_i$,
	$$\{h(c_1',c_2',\ldots,c_k'): c_j'\in \mathcal{C}_j\text{ for }j\neq i\}=\{h'(c_1',c_2',\ldots,c_k'): c_j'\in \mathcal{C}_j\text{ for }j\neq i\}.$$
	Let $\mathsf{Gen}(h, s) = (w_1^{(h)}, w_2^{(h)}, \ldots, w_k^{(h)})$ and $\mathsf{Gen}(h', s) = (w_1^{(h')}, w_2^{(h')}, \ldots, w_k^{(h')})$. For all $j = 1, \ldots, k$, let $m_j^{(h)}=\mathsf{P}_j(c_j,w_j^{(h)})$ and $m_j^{(h')}=\mathsf{P}_j(c_j,w_j^{(h')})$. Then,
	$$(c_i, s, w_i^{(h)}, m_1^{(h)}, \ldots, m_k^{(h)})\text{ and } (c_i, s, w_i^{(h')}, m_1^{(h')}, \ldots, m_k^{(h')})$$ are perfectly indistinguishable.
	}
\end{itemize}
\end{definition}

\subsection{A Simple FPCDS Scheme}

In this section, we present the first FPCDS scheme. Our scheme works with the family $\mathcal{H}=\{h_{(a,b)}: a,b\in\{0,1\}^n\}$ of Boolean condition functions, where $h_{(a,b)}: \{0,1\}^{2n} \to \{0,1\}$ is defined as:
$$h_{(a,b)}(\alpha,\beta)=
\begin{cases}
	1 & \text{if }(\alpha,\beta)=(a,b), \\
	0 & \text{otherwise},
\end{cases}
$$
and with secret domain $\mathcal{S} = \mathbb{G}$ where $\mathbb{G}$ is a finite Abelian group. Let $m_i[j]$ be the $j$th element of $m_i$ with index starting at 0.

	\begin{enumerate}
		\item $\mathcal{D}$ chooses a secret element $s \in \mathbb{G}$ and runs $\mathsf{Gen}(h_{(a,b)}, s)$. For this, $\mathcal{D}$ samples six random elements $t, r_1, r_2, u, v_1, v_2 \leftarrow \mathbb{G}$ such that $u$, $v_1$, $v_2$ are distinct. $\mathcal{D}$ sends $w_1=(a, s, t, r_1, u, v_1)$ to $P_1$ and $w_2=(b, s, t, r_2, u, v_2)$ to $P_2$.
		\item\label{step2} $P_1$ chooses $\alpha \in \{0,1\}^n$. If $\alpha = a$, then $P_1$ sends $m_1 = (u, s \oplus t)$ to Carol; otherwise, it sends $m_1 = (v_1, r_1)$. $P_2$ chooses $\beta$ in $\{0,1\}^n$. If $\beta = b$, then $P_2$ sends $m_2 = (u, t)$ to Carol; otherwise, it sends $m_2 = (v_2, r_2)$.
		\item Carol rejects if $m_1[0] \neq m_2[0]$; else it returns $g = \mathsf{Carol}(m_1, m_2) = m_1[1] \oplus m_2[1]$.
	\end{enumerate}

\begin{remark}
The communication complexity between the dealer and party $P_j$ is $|w_j| = n + 5|s|$, while the communication complexity between any party $P_j$ and Carol is $|m_j| = 2|s|$.
\end{remark}

\begin{theorem}
	The above scheme is a function-private CDS scheme.
\end{theorem}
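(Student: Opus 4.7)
The plan is to verify the four properties of Definition~\ref{FPCDS} in turn. \emph{Perfect correctness} follows from a direct case analysis on $h_{(a,b)}(\alpha,\beta)$: if $(\alpha,\beta)=(a,b)$ then $m_1=(u,s\oplus t)$ and $m_2=(u,t)$, so $m_1[0]=m_2[0]=u$ and Carol outputs $(s\oplus t)\oplus t=s$; otherwise at least one party sends a message of the form $(v_j,r_j)$, and by the distinctness of $u,v_1,v_2$ the first coordinates of $m_1$ and $m_2$ disagree, making Carol output $\bot$.

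For \emph{perfect secrecy}, I fix $h_{(a,b)}$ and $c=(\alpha,\beta)$ with $h_{(a,b)}(c)=0$, and argue that $(m_1,m_2)$ has the same distribution for every $s\in\mathbb{G}$. The crux is that in every sub-case at least one message carries fresh randomness independent of $s$: whenever $\beta\neq b$ the value $r_2$ in $m_2$ is uniform and independent of $s$ (symmetrically for $r_1$ if $\alpha\neq a$), and when $\beta=b$ and $\alpha\neq a$ the secret does not appear in either message. For \emph{perfect input privacy}, the case $h(c)=h(c')=1$ forces $c=c'=(a,b)$ and is trivial; when $h(c)=h(c')=0$ I enumerate the three sub-cases according to which of $\alpha,\beta$ coincide with $a,b$, and observe that in each sub-case $(m_1[0],m_2[0])$ is uniform on ordered pairs of distinct elements of $\mathbb{G}$ (drawn from the uniform distinct triple $\{u,v_1,v_2\}$), while $(m_1[1],m_2[1])$ is a uniform element of $\mathbb{G}^2$ independent of the first coordinates. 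Since this joint distribution is the same across sub-cases, input privacy follows.

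The subtle property is \emph{perfect function privacy}. Writing $h=h_{(a,b)}$ and $h'=h_{(a',b')}$, I would first unpack the side condition: for the family $\mathcal H$ and a fixed $i$, requiring $\{h(c_1',\dots,c_k'):c_j'\in\mathcal C_j,\,j\neq i\}=\{h'(c_1',\dots,c_k'):c_j'\in\mathcal C_j,\,j\neq i\}$ for every $c_i'\in\mathcal C_i$ pins down the $i$-th coordinate of the ``true point'', forcing $a=a'$ when $i=1$ and $b=b'$ when $i=2$. Combined with $h(c)=h'(c)$, this narrows the permissible pairs to those in which the view of $P_i$, namely $(c_i,s,w_i,m_1,m_2)$, should have identical marginals in the two worlds. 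The main obstacle---and the heart of the proof---is to verify that no cross-correlation between $w_i$ and the other party's message $m_{3-i}$ (for instance, the shared value $u$ appearing in both under one world but not the other) survives this restriction. I would handle this by a case analysis on $c_{3-i}$, invoking the side condition to show that in each surviving configuration the fresh randomness used to produce $m_{3-i}$ has the same distribution under $h$ and $h'$; matching the resulting distributions coordinate-by-coordinate then concludes the argument.
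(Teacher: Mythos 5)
Your proposal follows essentially the same route as the paper: a direct case analysis for each of the four properties, with the two key observations that (i) in every rejecting configuration the pair $(m_1,m_2)$ looks like a uniform pair of group elements with distinct first coordinates, and (ii) the side condition in the function-privacy definition pins down the first coordinate of the point (forcing $a=c$ when $i=1$), after which a three-way case analysis on the other party's input finishes the argument --- this is exactly the paper's structure, so the plan for function privacy, though not executed, is sound.

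One local flaw: in your perfect-secrecy argument for the sub-case $\alpha=a$, $\beta\neq b$, you justify independence from $s$ by pointing to the freshness of $r_2$ in $m_2$. That is the wrong piece of randomness: here $m_1=(u,\,s\oplus t)$ visibly contains $s$, and the uniformity of $r_2$ does nothing to mask it. The correct reason (used implicitly in the paper) is that $t$ is uniform and, since $\beta\neq b$, $t$ is never revealed to Carol except through $s\oplus t$, so it acts as a one-time pad and $m_1[1]$ is uniform regardless of $s$. Your blanket claim that ``in every sub-case at least one message carries fresh randomness independent of $s$'' does not by itself imply that the joint distribution of $(m_1,m_2)$ is independent of $s$; you need the masking randomness to sit in the same coordinate where $s$ appears. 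With that correction the secrecy argument goes through, and the rest of your proof matches the paper's.
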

\begin{proof}
	We prove that the scheme satisfies \Cref{FPCDS} for FPCDS. 
	
	\emph{Perfect Correctness:} Suppose the dealer chooses a secret $s \in \mathbb{G}$ and computes $\mathsf{Gen}(h_{(a,b)}, s) = (w_1,w_2)$, where $w_1=(a, s, t, r_1, u, v_1)$ and $w_2=(b, s, t, r_2, u, v_2)$. Suppose further that $P_1$ chooses $\alpha \in \{0,1\}^n$ and $P_2$ chooses $\beta \in \{0,1\}^n$. Let $m_1 = \mathsf{P}_1(\alpha, w_1)$ and $m_2 = \mathsf{P}_2(\beta, w_2)$. Suppose $h_{(a,b)}(\alpha, \beta) = 1$, i.e., $(\alpha, \beta) = (a,b)$. Then, $m_1 = (u, s \oplus t)$ and $m_2 = (u, t)$. Since $m_1[0] = u =m_2[0]$, Carol outputs: 
	$$g = m_1[1] \oplus m_2[1] = s\oplus t \oplus t = s.$$
	
	Else, suppose $h_{(a,b)}(\alpha, \beta) = 0$, i.e., $(\alpha, \beta) \neq (a,b)$. Then,
	$$(m_1[0], m_2[0]) = (u,v_2),\ (v_1, u)\text{ or }(v_1, v_2).$$
	By the choice of $u, v_1$, and $v_2$, all three elements are distinct. Hence, Carol rejects in all three cases.
	
	\emph{Perfect Secrecy:} Let $s, s' \in \mathbb{G}$, $\mathsf{Gen}(h_{(a,b)}, s) = (w_1,w_2)$ and $\mathsf{Gen}(h_{(a,b)}, s') = (w_1',w_2')$, where $w_1=(a, s, t, r_1, u, v_1)$, $w_2=(b, s, t, r_2, u, v_2)$, $w_1'=(a, s', t', r_1', u, v_1)$ and $w_2'=(b, s', t', r_2', u', v_2')$. For $\alpha,\beta \in \{0,1\}^n$, let $m_1 = \mathsf{P}_1(\alpha, w_1)$, $m_2 = \mathsf{P}_2(\beta, w_2)$, $m_1' = \mathsf{P}_1(\alpha, w_1')$ and $m_2' = \mathsf{P}_2(\beta, w_2')$. Suppose $h_{(a,b)}(\alpha,\beta)=0$, i.e., $(\alpha,\beta)\neq (a,b)$. We show that $(m_1, m_2)$ and $(m_1', m_2')$ are perfectly indistinguishable. For the first case, suppose $\alpha = a$ and $\beta \neq b$. Then $(m_1, m_2) = ((u, s \oplus t), (v_2, r_2))$ and $(m_1', m_2') = ((u', s' \oplus t'), (v_2', r_2')).$ In this case, since $u, u', t, t', r_2,r_2'$ are drawn uniformly, and $v_2, v_2'$ are drawn uniformly to be not equal to $u, u'$, respectively, $(m_1, m_2)$ and $(m_1', m_2'$) are both indistinguishable from $((\gamma, \delta),(\zeta, \eta))$, where $\gamma, \delta, \zeta, \eta \xleftarrow{\; \$ \;} \mathbb{G}$ such that $\gamma \neq \zeta$. The case where $\alpha\neq a$ and $\beta=b$ is analogous to this case.
	
	For the final case, suppose $\alpha\neq a$ and $\beta \neq b$. Then, it follows that $(m_1, m_2) = ((v_1, r_1), (v_2, r_2))$ and $(m_1', m_2') = ((v_1', r_1'), (v_2', r_2'))$. Again, $(m_1, m_2)$ and $(m_1', m_2'$) are both indistinguishable from $((\gamma, \delta), (\zeta, \eta))$, where $\gamma, \delta, \zeta, \eta \xleftarrow{\; \$ \;} \mathbb{G}$ such that $\gamma \neq \zeta$.
	
	\emph{Perfect Input Privacy:} Let $s \in \mathbb{G}$ and $\mathsf{Gen}(h_{(a,b)}, s) = (w_1, w_2)$ with $w_1=(a, s, t, r_1, u, v_1)$ and $w_2=(b, s, t, r_2, u, v_2)$. Suppose $\alpha, \alpha',\beta,\beta' \in \{0,1\}^n$ satisfy the  condition that $h_{(a,b)}(\alpha, \beta) = h_{(a,b)}(\alpha', \beta')$. Let $m_1 = \mathsf{P}_1(\alpha, w_1)$, $m_1' = \mathsf{P}_1(\alpha', w_1)$, $m_2 = \mathsf{P}_2(\beta, w_2)$ and $m_2' = \mathsf{P}_2(\beta', w_2)$. We shall show that $(m_1, m_2)$ and $(m_1', m_2')$ are indistinguishable. The only case where $h_{(a,b)}(\alpha,\beta)=1=h_{(a,b)}(\alpha',\beta')$ is when $(a,b)=(\alpha,\beta)=(a',b')$, so this case is trivial. We now consider $h_{(a,b)}(\alpha,\beta)=0=h_{(a,b)}(\alpha',\beta')$. For the first case, suppose $\alpha = \alpha' = a$. Then, $\beta \neq b \neq \beta'$, and hence $m_1 = m_1' = (u ,s \oplus t)$ with $m_2 = m_2' = (v_2, r_2)$. Thus, $(m_1, m_2)$ and $(m_1', m_2')$ are identical. 
	
	For the second case, suppose $\alpha = a$ and $\alpha' \neq a$ while $\beta \neq b$ and $\beta' = b$. Then, $(m_1, m_2) = ((u, s\oplus t), (v_2, r_2))$ while $(m_1', m_2') = ((v_1, r_1), (u, t))$. Since $u, t, v_1, v_2, r_1,$ and $r_2$ are all drawn uniformly, both $(m_1, m_2)$ and $(m_1', m_2')$ are indistinguishable from $((\gamma, \delta), (\zeta, \eta))$, where $\gamma, \delta, \zeta, \eta \xleftarrow{\; \$ \;} \mathbb{G}$ such that $\gamma \neq \zeta$. The rest of the cases are similar, and therefore the scheme satisfies perfect input privacy.
	
	\emph{Perfect Function Privacy:} Let $s \in \mathbb{G}$ and $\alpha, \beta \in \{0,1\}^n$. Let $h = h_{(a,b)}$ and $h' = h_{(c,d)} \in \mathcal{H}$ be point functions such that $h(\alpha, \beta) = h'(\alpha, \beta)$ and such that for every $c_1\in\{0,1\}^n$, $\{h(c_1,c_2): c_2\in C_2\}=\{h'(c_1,c_2): c_2\in C_2\}$.
	Let $\mathsf{Gen}(h, s) = (w_1,w_2)$, $\mathsf{Gen}(h', s) = (w_1',w_2')$ where $w_1=(a, s, t, u, r_1, v_1)$, $w_2=(b, s, t, u, r_2, v_2)$, $w_1'=(c, s, t', u', r_1', v_1')$ and $w_2'=(d, s, t', u', r_2', v_2')$. Let $m_1 = \mathsf{P}_1(h, w_1), m_2 = \mathsf{P}_2(h, w_2), m_1' = \mathsf{P}_1(h', w_1')$ and $m_2' = \mathsf{P}_2(h', w_2')$. We shall show that $(\alpha, s, w_1, m_1, m_2)$ and $(\alpha, s, w_1', m_1', m_2')$ are indistinguishable. If $h(\alpha, \beta) = h'(\alpha, \beta) = 1$, then the functions are identical point functions. Thus, suppose $h(\alpha, \beta) = h'(\alpha, \beta) = 0$ so that $(a, b)\neq (\alpha, \beta) \neq (c, d)$. The condition that $\{h(c_1,c_2): c_2\in C_2\}=\{h'(c_1,c_2): c_2\in C_2\}$ for all $c_1\in\{0,1\}^n$ means that $a=c$.
	
	First, suppose $a\neq \alpha \neq c$ and $b \neq \beta \neq d$. Then, $(m_1, m_2) = ((v_1, r_1), (v_2, r_2))$ and $(m_1', m_2') = ((v_1', r_1'), (v_2', r_2'))$. Since $u, v_1, v_2, r_1, r_2, u', v_1', v_2', r_1', r_2'$ are all drawn uniformly from $\mathbb{G}$ subject to the restrictions that $u$, $v_1$, $v_2$ and $u'$, $v_1$, $v_2'$ are distinct, respectively, it follows that both $(\alpha, s, (a, s, t, u, r_1, v_1), \allowbreak (v_1,r_1), (v_2,r_2))$ and $(\alpha, s, (a, s, t', u', r_1', v_1'), (v_1',r_1'), (v_2',r_2'))$ are indistinguishable from $(\alpha, s,\linebreak[4] (a, s, \theta, \lambda, \delta, \gamma), (\gamma,\delta), (\zeta, \eta))$, where $\gamma, \delta, \zeta, \eta, \theta, \lambda \xleftarrow{\; \$ \;} \mathbb{G}$ such that $\gamma$, $\zeta$ and $\lambda$ are distinct.
	
	Next, suppose that $a \neq \alpha \neq c$ and $b =\beta \neq d$. This gives $(m_1, m_2) = ((v_1, r_1), (u, t))$ and $(m_1', m_2') = ((v_1', r_1'), (v_2', r_2'))$. This case is similar to the one discussed above.
	
	Finally, suppose $a=\alpha=c$ and $b\neq\beta \neq d$. In this case, $(m_1, m_2) = ((u, s\oplus t), (v_2, r_2))$ and $(m_1', m_2') = ((u', s\oplus t'), (v_2', r_2'))$. Then, both $(\alpha, s, (a, s, t, u, r_1, v_1),\allowbreak (u, s\oplus t), (v_2, r_2))$ and $(\alpha, s,\allowbreak (a, s, t', u', r_1', v_1'), (u', s\oplus t'), (v_2', r_2'))$ are indistinguishable from $(\alpha, s, (a, s, \delta, \gamma, \theta, \lambda), (\gamma, s\oplus\delta), (\zeta, \eta))$, where $\gamma, \delta, \zeta, \eta, \theta, \lambda \xleftarrow{\; \$ \;} \mathbb{G}$ such that $\gamma$, $\zeta$ and $\lambda$ are distinct. $\qed$
\end{proof}

\begin{remark}
In the above scheme, secrecy, input privacy and function privacy may no longer hold if $P_1$ and $P_2$ repeat Step \ref{step2} of the protocol with the same shares $w_1$ and $w_2$. However, we can remove this limitation if we allow $P_1$ and $P_2$ to ``refresh'' their shares.

To do so, we fix a PRF $F:\mathcal{K}\times\mathbb{G}\to\mathbb{G}$, a PRP $P:\mathcal{K}\times\mathbb{G}\to\mathbb{G}$ and a PRF $F':\mathcal{K}'\times\{0,1\}^\ell\to\mathcal{K}$, and assume that the dealer $\mathcal{D}$ deals a common key $k'\in\mathcal{K}'$ to both parties $P_1$ and $P_2$. Each of the parties also keeps a counter $c$ of the number of times Step \ref{step2} of the protocol has been performed. After each run of Step \ref{step2}, $P_i$ updates the share $w_i$ by doing the following:
\begin{enumerate}
\item Compute $k_{1,i}=F'(k', c||1||i)$ and $k_j=F'(k', c||j)$ for $j=2,3$.
\item Replace $r_i$, $t$, $v_i$ and $u$ by $F(k_{1,i},r_i)$, $F(k_2, t)$, $P(k_3, v_i)$ and $P(k_3, u)$, respectively.
\item Increment $c$.
\end{enumerate}

Note that the keys $k_{1,1}$, $k_{1,2}$, $k_2$ and $k_3$ are unique to each run of the protocol. This is necessary as otherwise, if $|\mathbb{G}|$ is small, the elements $r_i$, $t$, $v_i$ and $u$ could end up in a cycle, repeating after a limited number of runs.
\end{remark}

\subsection{From FPCDS to FSS}
Let there be an FPCDS scheme for $k$ parties $P_1, \ldots, P_k$ for the family of Boolean condition functions $\mathcal{H}$, condition domain $\mathcal{C} = \mathcal{C}_1 \times \ldots \times \mathcal{C}_k$, and secret domain $\mathcal{S}$. Let $\mathcal{D}$ be the dealer for the FPCDS scheme and $\mathsf{Gen}$ be the randomized algorithm from the FPCDS scheme. We demonstrate that there exists a $k$-out-of-$k$ FSS scheme for the family of functions $\mathcal{H}$. Let $\mathcal{D}'$ be the dealer and $P_1', \ldots, P_k'$ be the $k$ parties for the FSS scheme. Our scheme is defined as a collection of three algorithms $(\mathsf{KeyGen}(h), \mathsf{Eval}(k_j), \mathsf{Rec}(m_1, \ldots, m_k))$, which are defined as:

		$$\bm{\mathsf{KeyGen}(h)}$$
		\begin{enumerate}[topsep=-3pt]
			\item For the chosen condition function $h \in \mathcal{H}$, $\mathcal{D}'$ samples a random secret $s \in \mathcal{S}$ and runs $\mathsf{Gen}(h, s)$ to generate the tuple $(w_1, w_2, \ldots, w_k)$.
			\item $\mathcal{D}'$ distributes to each $P_j'$ the key $k_j = w_j.$
		\end{enumerate} 

		$$\bm{\mathsf{Eval}(k_j)}$$
		\begin{enumerate}[topsep=-3pt]
			\item Each party $P_j'$, given $k_j = w_j$, chooses an input $c_j \in \mathcal{C}_j$.
			\item Each $P_j'$ runs the party algorithm $m_j = P_j(c_j, w_j)$ from the FPCDS scheme. 
		\end{enumerate}

		$$\bm{\mathsf{Rec}(m_1, \ldots, m_k)}$$
		\begin{enumerate}[topsep=-3pt]
			\item The $k$ parties publish the messages $m_1, m_2, \ldots, m_k$.
			\item The $k$ parties simulate Carol and compute $\mathsf{Carol}(m_1, m_2, \ldots, m_k$). If the $\mathsf{Carol}$ algorithm rejects, then the parties output 0; otherwise, if the $\mathsf{Carol}$ algorithm outputs the correct secret, $s$, the parties output 1.			
		\end{enumerate}

\subsubsection{Correctness and Privacy.} Our scheme satisfies the correctness and function privacy requirements for an FSS scheme.

	\emph{Perfect Correctness:} For the chosen condition function $h \in \mathcal{H}$, suppose $\mathcal{D}'$ samples $s \in \mathcal{S}$ and runs $\mathsf{Gen}(h, s) = (w_1, w_2, \ldots, w_k)$. Then, each $P_j'$ chooses $c_j \in \mathcal{C}_j$ so that $m_j = \mathsf{P}_j(c_j, w_j)$. Now, if $h(c_1,c_2,\ldots,c_k) = 1$, then by perfect correctness of FPCDS, $\mathsf{Carol}(m_1, m_2, \ldots, m_k$) returns the correct secret $s$. Hence, the parties output $1$ during the reconstruction step. Similarly, when $h(y) = 0$, $\mathsf{Carol}(m_1, m_2, \ldots, m_k)$ always rejects, and hence the parties output $0$.

	\emph{Function Privacy:} Fix $c = (c_1, c_2, \ldots, c_k) \in \mathcal{C}$ and $i \in [k]$. Let $h, h' \in \mathcal{H}$ such that for every $c_i' \in \mathcal{C}_i$, the sets $\{h(c_1',c_2',\ldots,c_k'): c_j'\in \mathcal{C}_j\text{ for }j\neq i\}$ and $\{h'(c_1',c_2',\ldots,c_k'): c_j'\in \mathcal{C}_j\text{ for }j\neq i\}$ are equal. Select $s\in\mathcal{S}$, and let $\mathsf{Gen}(h, s) = (w_1, w_2, \ldots, w_k)$, $\mathsf{Gen}(h', s) = (w_1', w_2', \ldots, w_k')$. Suppose $m_i = \mathsf{P}_i(c_i, w_i)$ and $m_i' = \mathsf{P}_i(c_i, w_i')$. During evaluation and reconstruction, $P_i'$ can observe $(c_i, w_i, m_1, m_2, \ldots, m_k)$ and $(c_i, w_i', m_1', m_2', \ldots, m_k')$, which are perfectly indistinguishable by the function privacy property of the FPCDS scheme.
	
Note that the given procedure is an FSS scheme for $\mathcal{H}$ with the following two caveats.
\begin{enumerate}
\item The input $c_i$ must remain private to $P_i'$.
\item For each party $P_i'$, function privacy only holds for $h,h'\in\mathcal{H}$ such that:
$$\{h(c_1',c_2',\ldots,c_k'): c_j'\in \mathcal{C}_j\text{ for }j\neq i\}=\{h'(c_1',c_2',\ldots,c_k'): c_j'\in \mathcal{C}_j\text{ for }j\neq i\}$$
for every $c_i'\in\mathcal{C}_i$.
\end{enumerate}

\section{Optimal Threshold Function Secret Sharing of Polynomials}\label{sec5}
Let $\mathcal{D}$ denote the dealer, $p_1, \ldots, p_k$ denote the $k$ parties, and fix the threshold number $t \leq k$. Suppose we are working with polynomials over a field $\mathbb{F}$. Have the dealer $\mathcal{D}$ choose some polynomial $p(x) = a_nx^n + a_{n-1}x^{n-1} + \ldots + a_0$. We will construct a $t$-out-of-$k$ function secret sharing scheme for $p(x)$ by defining the algorithms \textbf{Gen} (generation [of keys]), \textbf{Eval} (evaluation [of shares]), and \textbf{Rec} (reconstruction [of the function evaluation]).

\subsubsection{Gen.}
The dealer $\mathcal{D}$ fixes random polynomials $q_n, q_{n-1}, \ldots, q_0$ of degree $t-1$ over $\mathbb{F}$ such that $q_n(0) = a_n, q_{n-1}(0) = a_{n-1}, \ldots, q_0(0) = a_0$. Then, the dealer distributes the key $K_i = \langle q_n(i), q_{n-1}(i), \ldots, q_0(i)\rangle$ to party $p_i$.

\subsubsection{Eval.}
Let $\hat{x} \in \mathbb{F}$ be the desired input to $p(x)$ so that all parties have $\hat{x}$. Have each party $p_i$ calculate the vector $\textbf{x} = \langle \hat{x}^n, \hat{x}^{n-1}, \ldots, 1 \rangle$. Then, have each party $p_i$ calculate their share $s_i$ by taking the dot product of their key $K_i$ with the vector \textbf{x}, i.e. $s_i = K_i \cdot \textbf{x}$.

\subsubsection{Rec.}
When $t$ parties come together to reconstruct the output $p(\hat{x})$, they will have $t$ of the points on the polynomial $Q(y) = q_n(y)\hat{x}^n + q_{n-1}(y)\hat{x}^{n-1} + \ldots + q_0(y)$, and since $Q(y)$ is degree $t-1$ in $y$, the $t$ parties will be able to reconstruct the polynomial $Q(y)$ and therefore the value $Q(0) = p(\hat{x})$. 

\subsection{Correctness and Security}
We begin by analyzing the correctness of the scheme, followed by an evaluation of its security. 
\subsubsection{Correctness.}
Suppose $t$ parties (without loss of generality, let $p_1, \ldots, p_t$ be the parties) come together to reconstruct $p(\hat{x})$. Note that each party $p_i$ has evaluated 
$$K_i \cdot \textbf{x} = q_n(i)\hat{x}^n + q_{n-1}(i)\hat{x}^{n-1} + \ldots + q_0(i) = Q(i).$$
Thus, since all the polynomials $q_j$ are degree at most $t-1$, the $t$ parties cumulatively have $t$ points on the degree $t-1$ polynomial $Q(y)$ and therefore can reconstruct $Q(y)$. Now, note that 
$$Q(0) = q_n(0)\hat{x}^n + q_{n-1}(0)\hat{x}^{n-1} + \ldots + q_0(0) = a_n\hat{x}^n + a_{n-1}\hat{x}^{n-1} + \ldots + a_0 = p(\hat{x})$$
so that the evaluation of $Q(0)$ yields the correct output for $p(\hat{x})$ as desired. 

\subsubsection{Security.} Suppose at most $t-1$ parties come together to evaluate $p(\hat{x})$. Note that the reconstruction of $Q(y)$ is identical to Shamir reconstruction; determining $Q(y)$ (a degree $t-1$ polynomial) from $t-1$ points has the same security as Shamir secret sharing, so the scheme is secure.

\subsection{Optimality}
Let $\mathbb{F}_{q}$ be the finite field with $q$ elements for some prime power $q$, and let $\mathsf{Poly}_{\leq n}$ be the set of polynomials of degree $\leq n$ over $\mathbb{F}_q$. Note that sharing $\mathsf{Poly}_{\leq n}$ using the above FSS scheme results in a share size of $\log(q^{n+1})$ bits for each party. We will prove that this is optimal for an information-theoretical secure FSS scheme, at least for $n\leq q-1$.

It is clear that there are exactly $q^q$ functions from $\mathbb{F}_q$ to $\mathbb{F}_q$. Note that $|\mathsf{Poly}_{\leq q-1}|\leq q^q$ since a polynomial of degree $\leq q-1$ has $q$ coefficients, and each coefficient is an element of $\mathbb{F}_q$. Furthermore, since a polynomial of degree $\leq q-1$ has at most $q-1$ roots, any two polynomials in $\mathsf{Poly}_{\leq q-1}$ define distinct functions from $\mathbb{F}_q$ to $\mathbb{F}_q$. Thus, $\mathsf{Poly}_{\leq q-1}$ is precisely the set of all functions from $\mathbb{F}_q$ to $\mathbb{F}_q$.

We have established above that the polynomials in $\mathsf{Poly}_{\leq n}$ define distinct functions for $n\leq q-1$, thus there are $q^{n+1}$ distinct functions in $\mathsf{Poly}_{\leq n}$. Suppose there is a FSS scheme for $\mathsf{Poly}_{\leq n}$ where a certain party $P_j$ has a share size of $s<\log(q^{n+1})$ bits. Thus, the share of party $P_j$ is one of $2^s<q^{n+1}$ possibilities. Then given shares of any $t-1$ parties, by simply guessing the share of $P_j$, it is possible to narrow down the secret shared polynomial to one of $2^s<q^{n+1}$ possibilities, hence the scheme is not information-theoretic secure. This completes the proof of the claim.

\section{Future Work}\label{sec6}
FSS~\cite{Santis[94],Boyle[15]} aims to allow multiple parties to collaboratively evaluate a class $\mathcal{F}$ of functions $f: \{0,1\}^n \longrightarrow \mathbb{G}$, which is shared among $n \geq 2$ parties, by using distributed functions $f_i: \{0,1\}^\ell \longrightarrow \mathbb{G}$, where $\mathbb{G}$ is an abelian group such that (i) $\sum\limits_{i=1}^n f_i = f$, and (ii) any strict subset of $\{f_i\}_{i=1}^n$ hides $f$. Most of the known FSS schemes are $n$-out-of-$n$, i.e., they require all $n$ shares for function reconstruction. An FSS scheme is said to be threshold if for $t < n$, all subsets of cardinality at least $t$ can reconstruct $f$ from their shares, i.e., $\sum\limits_{i=1}^t f_i = f$. Similarly, if an FSS scheme allows sharing the function such that only certain selected subset(s) of arbitrary cardinality can reconstruct $f$, then it is called FSS for general access structures. The only FSS schemes for general access structures and threshold structures were developed by Koshiba~\cite{Kosh[18]} and Luo et al.~\cite{Luo[20]}, respectively. However, the scheme from~\cite{Kosh[18]} is restricted in that it only works for succinct functions (w.r.t. a fixed Fourier basis). 

Existing FSS solutions are not as flexible and versatile as the various secret sharing schemes~\cite{Beimel[11]}, which provide enhanced privacy~\cite{SehrawatVipin[21],Vipin[20],Stinson[87],Phillips[92],Blundo[96],Kishi[02],Deng[07],VipinThesis[19]}, refreshable shares~\cite{Yung[91],Amir[95],Frank[97],Zhou[05],Nikov[04]}, ability to share multiple secrets simultaneously~\cite{Yang[04],Blundo[94]}, verifiable shares and secrets~\cite{Tompa[89],McEliece[81],Chor[85],Cachin[02],Chaum[88],Ben[88],Feld[87],Oded[91],Tor[01],Rabin[89],Genn[98],Basu[19],Kate[10],Cascudo[17],Backes[13],Stadler[96]}, flexibility in sharing/reconstruction procedures and size of the shares~\cite{Shamir[79],Ito[87],Brick[89],Karch[93],Gusta[88],LiuStoc[18],Benny[20],Blundo[92],Capo[93],Csi[96],Csi[97],Dijk[95]}. Its extensive flavors have enabled secret sharing to have applications in a multitude of areas, including threshold cryptography~\cite{Yvo[89]}, (secure) multiparty computation~\cite{Ben[88],Chaum[88],Cramer[00],CramDam[15]}, attribute-based encryption~\cite{Goyal[06],Waterss[11]}, generalized oblivious transfer~\cite{Tassa[11],Shankar[08]}, perfectly secure message transmission~\cite{Danny[93]}, access control~\cite{Naor[06]}, e-voting~\cite{Berry[99],Yung[04]}, e-auctions~\cite{Mic[98],Peter[09]}, anonymous communications \cite{Sehrawat[17]}, and byzantine agreement~\cite{Abraham[08],Canetti[93],Feld[88],Katz[06],Patra[14]}. On the other hand, due to its limited and restrictive solutions, FSS has only found applications in secure computation~\cite{Boyle[19]}, private information retrieval~\cite{Boyle[16],Niv[14]} and private contact tracing~\cite{Ditti[20]}. Hence, there is a need to explore novel variants of FSS with enhanced flexibility and (function) privacy. Restricted functionality and computational security/privacy guarantees can be one approach to derive efficient FSS schemes with desired properties.

\bibliographystyle{plainurl}
\bibliography{CO2020}
\end{document}